\newcommand{\fDivergence}[2]{D_{f}\left(#1||#2\right)}
\newcommand{\fGammaDivergence}[2]{D_{f}^{\Gamma}\left(#1||#2\right)}
\newcommand{\KLDivergence}[2]{D_{\mathrm{KL}}\left(#1||#2\right)}
\newcommand{\ChiSquareDivergence}[2]{\chi^2\left(#1||#2\right)}
\newcommand{\ConditionalKLDivergence}[3]{D_{\mathrm{KL}}\left(#1||#2|#3\right)}
\newcommand{\GeneralizedCGF}[3]{\Lambda_{#1;#2}\left(#3\right)}
\newcommand{\GammaIPM}[2]{W^{\Gamma}\left(#1,#2\right)}
\newcommand{\GammaBarIPM}[2]{W^{\Bar{\Gamma}}\left(#1,#2\right)}
\newcommand{\Expectation}[2]{\mathbb{E}_{#1}\left[#2\right]}
\newcommand{\RadonNikodym}[2]{\frac{\mathrm{d}#1}{\mathrm{d}#2}}
\newcommand{\ProbabilityKernel}[2]{P_{#1|#2}}
\newcommand{\RealNumber}{\mathbb{R}}
\newcommand{\RealNumberNonnegative}{\mathbb{R}_{+}}
\newcommand{\GeneraliationError}[2]{\mathrm{gen}\left(#1,#2\right)}
\newcommand{\GeneralizationErrorAlgorithmic}{\mathrm{gen}\left(\ProbabilityKernel{W}{Z^n}, \nu, \mu\right)}
\newcommand{\InDistributionGeneralizationGap}{\mathrm{gen}\left(\ProbabilityKernel{W}{Z^n}, \nu\right)}
\newcommand{\PPGeneralizationErrorAlgorithmic}{\mathrm{\widetilde{gen}}\left(\ProbabilityKernel{W}{Z^n}, \nu, \mu\right)}
\newcommand{\LossFunction}[2]{\ell\left(#1,#2\right)}
\newcommand{\RecenteredLossFunction}[2]{\Bar{\ell}\left(#1,#2\right)}
\newcommand{\LipschitzNorm}[1]{\|#1\|_{\mathrm{Lip}}}
\newcommand{\TotalVariation}[2]{\mathrm{TV}\left(#1,#2\right)}
\newcommand{\SquaredHellingerDivergence}[2]{H^2\left(#1||#2\right)}
\newcommand{\HellingerDistance}[2]{H\left(#1||#2\right)}
\newcommand{\fCMI}[4][f]{I_{#1}\left(#2;#3|#4\right)}
\newcommand{\CMI}[3]{I\left(#1;#2|#3\right)}
\newcommand{\DisintegratedfCMI}[4][f]{I_{#1}^{#4}\left(#2;#3\right)}
\newcommand{\pushright}[1]{\ifmeasuring@#1\else\omit\hfill$\displaystyle#1$\fi\ignorespaces}
\newcommand{\pushleft}[1]{\ifmeasuring@#1\else\omit$\displaystyle#1$\hfill\fi\ignorespaces}
\DeclareRobustCommand\onedot{\futurelet\@let@token\@onedot}
\def\@onedot{\ifx\@let@token.\else.\null\fi\xspace}
\def\eg{\emph{e.g}\onedot} 
\def\ie{\emph{i.e}\onedot} 
\def\cf{\emph{c.f}\onedot} 
\def\st{\emph{s.t}\onedot}
\def\wrt{w.r.t\onedot}
\def\iid{\emph{i.i.d}\onedot}
\newcommand{\lele}{\textcolor{black}}
\newcommand{\lwl}{\textcolor{black}}
\newcommand{\renjie}{\textcolor{black}} 
\theoremstyle{plain}
\newtheorem{proposition}{Proposition}
\newtheorem{theorem}{Theorem}
\newtheorem{corollary}{Corollary}
\newtheorem{lemma}{Lemma}
\theoremstyle{definition}
\newtheorem{definition}{Definition}
\newtheorem{assumption}{Assumption}
\newtheorem{remark}{Remark}
\begin{document}

\title{An Information-Theoretic Framework for Out-of-Distribution Generalization \\ with Applications to \\ Stochastic Gradient Langevin Dynamics}

\author{Wenliang Liu, Guanding Yu, Lele Wang\textsuperscript{\textsection}, and Renjie Liao\textsuperscript{\textsection}}




\maketitle
\begingroup
\renewcommand\thefootnote{}
\footnotetext{Wenliang Liu and Guanding Yu are with the College of Information Science and Electronic Engineering, Zhejiang University, Hangzhou 310027, China (email: \{liuwenliang, yuguanding\}@zju.edu.cn).}
\footnotetext{Lele Wang and Renjie Liao are with the Department of Electrical and Computer Engineering, University of British Columbia, Vancouver, BC V6T1Z4, Canada (email:\{lelewang, rjliao\}@ece.ubc.ca).}
\footnotetext{Renjie Liao is a faculty member at Vector Institute and a Canada CIFAR AI Chair.}
\renewcommand\thefootnote{\textsection}
\footnotetext{Co-corresponding authors.}
\renewcommand\thefootnote{}
\footnotetext{This work was accepted in part at the 2024 IEEE International Symposium on Information Theory \cite{liu2024information} and the 2024 Canadian Workshop on Information Theory.}
\endgroup

\begin{abstract}
We study the Out-of-Distribution (OOD) generalization in machine learning and propose a general framework that establishes information-theoretic generalization bounds. 
Our framework interpolates freely between Integral Probability Metric (IPM) and $f$-divergence, which naturally recovers some known results (including Wasserstein- and KL-bounds), as well as yields new generalization bounds. 
Additionally, we show that our framework admits an optimal transport interpretation. 
When evaluated in two concrete examples, the proposed bounds either strictly improve upon existing bounds in some cases or match the best existing OOD generalization bounds. 
Moreover, by focusing on $f$-divergence and combining it with the Conditional Mutual Information (CMI) methods, we derive a family of CMI-based generalization bounds, which include the state-of-the-art ICIMI bound as a special instance. 
Finally, leveraging these findings, we analyze the generalization of the Stochastic Gradient Langevin Dynamics (SGLD) algorithm, showing that our derived generalization bounds outperform existing information-theoretic generalization bounds in certain scenarios.
\end{abstract}

\begin{IEEEkeywords}
Generalization, Out-of-Distribution, $f$-Divergence, Integral Probability Metric (IPM), Stochastic Gradient Langevin Dynamics (SGLD)
\end{IEEEkeywords}

\section{Introduction}
\renjie{In machine learning, generalization is the ability of a model to make accurate predictions on unseen data during training. How to analyze and improve the generalization of models are the core subjects of machine learning.}
In the past decades, a series of mathematical tools have been invented or applied to bound the generalization gap, \renjie{\ie, the difference between testing and training performance of a model}, such as the VC dimension \cite{vapnik1971uniform}, Rademacher complexity \cite{bartlett2002rademacher}, covering numbers \cite{pollard1984convergence}, algorithmic stability \cite{bousquet2002stability}, and PAC Bayes \cite{mcallester1998some}. 
Recently, there have been attempts to bound the generalization gap using information-theoretic tools. The idea is to regard the learning algorithm as a communication channel that maps the input set of samples $S$ to the output hypothesis (model weights) $W$. In the pioneering work \cite{russo2016controlling, xu2017information}, the generalization gap is bounded by the Mutual Information (MI) between $W$ and $S$, which reflects the intuition that a learning algorithm generalizes well if \renjie{the learned model weights $W$} leaks little information about the training samples. 
However, the generalization bound becomes vacuous whenever the MI term is infinite. The remedy is to replace the MI term with other smaller quantities to reflect the information that $S$ leaks to $W$, and this has been fulfilled by two orthogonal works, exploiting the Individual Mutual Information (IMI)~\cite{bu2020tightening} and the Conditional Mutual Information (CMI)~\cite{steinke2020reasoning}, respectively. Specifically, the IMI method bounds the generalization gap using the mutual information between $W$ and each individual training datum $Z_i$, rather than the MI between $W$ and the set of whole samples. In certain scenarios, we have infinite MI term but finite IMI term, and thus the resulting IMI bound significantly enhances the MI bound~\cite{bu2020tightening}. Meanwhile, the CMI method studies the generalization through a set of super-samples (also known as ghost samples), a pair of independent and identically distributed (\iid) copies $Z_i^+$ and $Z_i^-$, and then uses a Rademacher random variable $R_i$ to choose $Z_i^+$ or $Z_i^-$ as the $i$-th training datum. The resulting generalization bound is captured by the CMI between $W$ and the samples' identity $R_i$, conditioned on the super-samples. 
Since then, a line of work \cite{haghifam2020sharpened, hellstrom2020generalization, negrea2019information, rodriguez2021random, zhou2022individually, aminian2022learning, chu2024unified} has been proposed to tighten information-theoretic generalization bounds, 
\lwl{and as a remarkable application, these information-theoretic methods have been employed to study the generalization performance of learning algorithms, \eg, the Langevin dynamics and its variants}. 

In practice, it is often the case that the training data suffer from selection biases \renjie{and data distribution shifts with time, \eg, in continual learning}, causing the distribution of test data differs from the training distribution. 
This motivates researchers to study the Out-of-Distribution (OOD) generalization. 
It is common practice to extract invariant features to improve \renjie{the} OOD performance \cite{arjovsky2019invariant}, and \renjie{an OOD generalization theory} was established in~\cite{ye2021towards} via the perspective of invariant features. 
As a sub-field of OOD generalization, the domain adaptation was systematically studied in \cite{ben2006analysis,ben2010theory,sugiyama2007direct,mansour2012multiple}.
In the information-theoretic regime, \lwl{the mismatch between the training and the testing distributions yields an additional penalty term to the OOD generalization bounds, and such penalty can be measured by either the Kullback–Leibler (KL) divergence \cite{wu2020information, masiha2021learning, wang2022information} or the total variation (TV) as well as the Wasserstein distances \cite{wu2020information, wang2022information}.
\lele{In this paper, we establish a universal framework for the OOD generalization analysis, which interpolates between the Integral Probability Metric (IPM) and the $f$-divergence. 
Our OOD generalization bounds include most existing bounds as special cases and are shown to strictly improve upon existing bounds in some cases.}
The framework is expressed in terms of individual datum and thus can be regarded as a natural extension of the classical IMI method.
Meanwhile, we demonstrate that part of these results can be adapted to the CMI settings, which gives rise to an extension and improvement of the cutting-edge ICIMI method to its $f$-divergence variants.
Finally, we demonstrate that our findings lead to several improved generalization bounds of the Stochastic Gradient Langevin Dynamics (SGLD) algorithm. 
To this end, we notice that the common way to derive generalization bounds for SGLD, as what IMI and CMI methods did, is to leverage the chain rule of the KL divergence.
However, such approaches are inapplicable to our methods in general, simply because the chain rule does not hold for general $f$-divergence. 
The remedy is to exploit the subaddivity of the $f$-divergence, which leads to an asymptotically tighter generalization bound for SGLD.}

\subsection{Related Works}
Information-theoretic generalization bounds have been established in the previous work \cite{wu2020information} and \cite{wang2022information}, under the context of transfer learning and domain adaptation, respectively. 
The KL-bounds are derived in \cite{masiha2021learning} utilizing the rate-distortion theory. 
If we ignore the minor difference of models in the generalization bounds, their results can be regarded as natural corollaries of our framework. 
Moreover, \cite{esposito2021generalization} also studied the generalization bounds using $f$-divergence, but it only considered the in-distribution case, and the results are given in \renjie{the} high-probability form. 
Furthermore, both \cite{lugosi2022generalization} and our work uses the convex analysis (Legendre-Fenchel dual) to study the generalization. However, our work restricts the \emph{dependence measure} to $f$-divergence or IPM. 
In contrast, \cite{lugosi2022generalization} did not designate the specific form of the dependence measure, but relied on the strong convexity of the dependence measure. 
This assumption does not hold for IPM and some of the $f$-divergence. 
Besides, \cite{lugosi2022generalization} did not consider the OOD generalization as well. 
In the final writing phase of this paper, we noticed \lele{two independent works~\cite{viallard2024tighter, wang2024f} that use} a similar technique to perform \renjie{the} generalization analysis. 
In particular, \cite{viallard2024tighter} derived PAC-Bayes bounds based on the $(f,\Gamma)$-divergence, which can be regarded as a high-probability counterpart of our paper. 
\cite{wang2024f} focused on the domain adaptation problem using $f$-divergence, and thus their models and methods are partially covered by this paper. 
Readers are referred to \cite{wang2024f} for more specific results under the context of domain adaptation. 
Additionally, \cite{perlaza2024generalization} studied the generalization error at a high level through the method of gaps, a technique for deriving the closed-form generalization error in terms of information-theoretic measures. The $f$-divergence was also employed to regularize the empirical risk minimization algorithm, as explored in \cite{daunas2024equivalence} and the references therein.

\subsection{Contributions}
\begin{enumerate}
    \item We develop a theoretical framework for establishing information-theoretic generalization bounds in the OOD scenarios, which allows us to interpolate freely between IPM and $f$-divergence. 
    In addition to reproducing existing results, such as the generalization bounds based on \renjie{the} Wasserstein distance~\cite{wu2020information, wang2022information} and the KL divergence~\cite{wu2020information, wang2022information, masiha2021learning}, our framework also derives new generalization bounds. 
    \renjie{Notably, our OOD generalization bounds can be applied to in-distribution cases by simply setting the testing distribution equal to the training distribution.}
    \item Our framework is designed to work with the CMI methods for \renjie{bounded loss functions}, and extends the state-of-the-art CMI-based result, known as the Individually Conditional Individual Mutual Information (ICIMI) bound, to a range of $f$-divergence-based ICIMI bound. 
    This enables us to properly select the function $f$ to derive tighter generalization bounds compared to the standard ICIMI bound.
    \item \renjie{We leverage above results to analyze the generalization of the SGLD algorithm for bounded and Lipschitz loss functions. 
    First, we improve upon an existing SGLD generalization bound in~\cite{bu2020tightening} by introducing an additional domination term and extending it to the OOD setting. 
    Next, we employ the squared Hellinger divergence and ICIMI methods to derive an alternative bound, highlighting its advantages under the assumption of without-replacement sampling. 
    Finally, we relax this without-replacement sampling assumption to an asymptotic condition, showing that the resulting generalization bound is tighter than existing information-theoretic bounds for SGLD in the asymptotic regime.}
\end{enumerate}

\subsection{Notation and Organization}
We denote the set of real numbers and the set of non-negative real numbers by $\RealNumber$ and $\RealNumberNonnegative$, respectively. 
Sets, random variables, and their realizations are respectively represented in Calligraphic fonts (\eg, $\mathcal{X}$), uppercase letters (\eg, $X$), and lowercase letters (\eg, $x$).
Let $\mathcal{P}(\mathcal{X})$ be the set of probability distributions over set $\mathcal{X}$ and $\mathcal{M}(\mathcal{X})$ be the set of measurable functions over $\mathcal{X}$. Given $P,Q\in\mathcal{P}(\mathcal{X})$, 
we write $P\perp Q$ if $P$ is singular to $Q$ and $P\ll Q$ if $P$ is absolutely continuous \wrt $Q$. We write $\RadonNikodym{P}{Q}$ as the Radon-Nikodym derivative. 

The rest of this paper is organized as follows. 
Section~\ref{section::models and preliminaries} introduces the problem formulation and some technical preliminaries. In Section~\ref{section::A general theorem}, we propose a general theorem on OOD generalization bounds and illustrate its optimal transport interpretation. Examples and special cases of the general theorem, including both known generalization bounds and new results, are given in Section~\ref{section::corollaries}. In Section~\ref{section::Improvements on the CMI Methods}, we \renjie{restate} and improve the CMI methods for bounding the generalization gap, and the result is applied to the generalization analysis of the SGLD algorithm in Section~\ref{section::Applications on the SGLD Algorithm}. Finally, the paper is concluded in Section~\ref{section::Conclusions}.

\section{Problem Formulation}
\label{section::models and preliminaries}
In this section, we introduce our problem formulation and some preliminaries.

\subsection{Problem Formulation}
\label{subsection::model}
Denote by $\mathcal{W}$ the hypothesis space and $\mathcal{Z}$ the space of data (\ie, input and output pairs). 
We assume training data $(Z_1,\ldots, Z_n)$ are \iid following the distribution $\nu$.
Let $\ell\colon\mathcal{W}\times \mathcal{Z}\to\RealNumberNonnegative$ be the loss function. 
From the Bayesian perspective, our target is to learn a posterior distribution of hypotheses over $\mathcal{W}$, based on the observed data sampled from $\mathcal{Z}$, such that the expected loss is minimized. 
Specifically, we assume the prior distribution $Q_W$ of hypotheses is known at the beginning \renjie{of the learning}. 
Upon observing $n$ samples, $z^n = \left(z_1, \cdots, z_n\right)\in\mathcal{Z}^n$, a learning algorithm outputs one $w\in\mathcal{W}$ through a process like Empirical Risk Minimization (ERM) \cite{vapnik1991principles}. 
The learning algorithm is either deterministic (\eg, gradient descent with fixed hyperparameters) or stochastic (\eg, stochastic gradient descent). 
Thus, the learning algorithm can be characterized by a probability kernel $\ProbabilityKernel{W}{Z^n}$\footnote{Given $z^n\in\mathcal{Z}^n$, $\ProbabilityKernel{W}{Z^n=z^n}$ is a probability measure over $\mathcal{W}$.}, and its output is regarded as one sample from the posterior distribution $\ProbabilityKernel{W}{Z^n=z^n}$. 

In this paper, we consider the OOD generalization setting where the training distribution $\nu$ differs from the testing distribution $\mu$. 
Given a set of samples $z^n$ and the algorithm's output $w$, the incurred generalization gap is
\begin{equation}
\GeneraliationError{w}{z^n} \coloneq \Expectation{\mu}{\LossFunction{w}{Z}} - \frac{1}{n}\sum_{i=1}^{n}\LossFunction{w}{z_i}.
\end{equation}
Finally, we define the generalization gap of the learning algorithm by taking expectation \wrt $w$ and $z^n$, \ie, 
\begin{equation} 
\GeneralizationErrorAlgorithmic \coloneq \mathbb{E}\left[\GeneraliationError{W}{Z^n}\right], \label{equation::algorithmic generalization error}
\end{equation}
where the expectation is \wrt the joint distribution of $(W,Z^n)$, given by $\ProbabilityKernel{W}{Z^n}\otimes\nu^{\otimes n}$.
An alternative approach to defining the generalization gap is to replace the empirical loss in (\ref{equation::algorithmic generalization error}) with the population loss \wrt the training distribution $\nu$, \ie, 
\begin{equation}
    \PPGeneralizationErrorAlgorithmic \coloneq \Expectation{P_W}{\Expectation{\mu}{\LossFunction{W}{Z}} - \Expectation{\nu}{\LossFunction{W}{Z}}},
    \label{equation::PP algorithmic generalization error}
\end{equation}
where $P_W$ denotes the marginal distribution of $W$. By convention, we refer to (\ref{equation::algorithmic generalization error}) as the Population-Empirical (PE) generalization gap and refer to (\ref{equation::PP algorithmic generalization error}) as the Population-Population (PP) generalization gap. In the next two sections, we focus on bounding both the PP and the PE generalization gap using information-theoretic tools.

\subsection{Preliminaries}
\label{subsection::prerequisites}
\begin{definition}[$f$-Divergence \cite{polyanskiy2022information}]Let $f\colon (0,+\infty)\to \RealNumber$ be a convex function satisfying $f(1) = 0$. Given two distributions $P,Q \in \mathcal{P}(\mathcal{X})$, decompose $P = P_c+P_s$, where $P_c\ll Q$ and $P_s \perp Q$. The $f$-divergence between $P$ and $Q$ is defined by
    \begin{equation}
        \fDivergence{P}{Q}\coloneq \Expectation{Q}{f\left(\RadonNikodym{P_c}{Q}\right)} + f'(\infty)P_s(\mathcal{X}),
        \label{equation::f divergence general}
    \end{equation} 
    where $f'(\infty) = \lim_{x\to +\infty}f(x) / x$. If $f$ is super-linear, \ie, $f'(\infty) = +\infty$, then the $f$-divergence has the form of 
    \begin{equation}
        \fDivergence{P}{Q} =  
        \begin{cases}
            \Expectation{Q}{f\left(\RadonNikodym{P}{Q}\right)}, &\text{if}\ P\ll Q, \\
            +\infty, &\text{otherwise}.
        \end{cases}
        \label{equation::f divergence superlinear}
    \end{equation} 
\end{definition}

\begin{definition}[Generalized Cumulant Generating Function (CGF) \cite{birrell2022f, agrawal2021optimal}]Let $f$ be defined as above and $g$ be a measurable function. The generalized cumulant generating function of $g$ \wrt $f$ and $Q$ is defined by
    \begin{equation}
        \GeneralizedCGF{f}{Q}{g} \coloneq \inf_{\lambda \in \RealNumber} \bigl\{\lambda + \Expectation{Q}{f^*(g-\lambda)}\bigr\},
        \label{equation::generalized cumulant generating function}
    \end{equation}
    where $f^*$ represents the Legendre-Fenchel dual of $f$, as
    \begin{equation}
        f^*(y)\coloneq \sup_{x\in\RealNumber}\bigl\{xy - f(x)\bigr\}.
    \end{equation}
\label{definition::generalized CGF}
\end{definition}
\begin{remark}
    Taking $f(x) = x\log x - (x - 1)$ \renjie{in the $f$-divergence} yields the KL divergence\footnote{Here we choose $f$ to be standard, \ie, $f'(1) = f(1) = 0$.}. A direct calculation shows $f^*(y) = e^y - 1$. The infimum is achieved at $\lambda = \log \Expectation{Q}{e^g}$ and thus $\GeneralizedCGF{f}{Q}{g} = \log \Expectation{Q}{e^{g}}$. This means $\GeneralizedCGF{f}{Q}{t(g - \Expectation{Q}{g})}$ degenerates to the classical cumulant generating function of $g$.
    \label{remark::cumulant generating function}
\end{remark}

If we refer to $Q$ as a fixed reference distribution and regard $\fDivergence{P}{Q}$ as a function of distribution $P$, then the $f$-divergence and the generalized CGF form a pair of Legendre-Fenchel dual, as clarified in Lemma~\ref{lemma::variational representation of f divergence}.
\begin{lemma}[Variational Representation of $f$-Divergence \cite{polyanskiy2022information}] 
    \begin{equation}
        \fDivergence{P}{Q} = \sup_{g}\bigl\{\Expectation{P}{g} - \GeneralizedCGF{f}{Q}{g}\bigr\},
        \label{equation::variational representation of f divergence}
    \end{equation}
    where the supreme can be either taken over
    \begin{enumerate}
        \item the set of all simple functions, or
        \item $\mathcal{M}(\mathcal{X})$, the set of all measurable functions, or
        \item $L_Q^{\infty}(\mathcal{X})$, the set of all $Q$-almost-surely bounded functions.
    \end{enumerate}
\label{lemma::variational representation of f divergence}
\end{lemma}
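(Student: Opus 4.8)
The plan is to read the claimed identity as the statement that, for a fixed reference $Q$, the maps $P\mapsto\fDivergence{P}{Q}$ and $g\mapsto\GeneralizedCGF{f}{Q}{g}$ form a Legendre--Fenchel dual pair, and to prove it by the Fenchel--Young inequality together with an explicit saturating test function. The first step is to dispose of the infimum buried inside the generalized CGF. Writing $\GeneralizedCGF{f}{Q}{g}=\inf_{\lambda}\{\lambda+\Expectation{Q}{f^*(g-\lambda)}\}$ and using that $P$ is a probability measure, so that $\Expectation{P}{g-\lambda}=\Expectation{P}{g}-\lambda$, the substitution $h=g-\lambda$ shows
\begin{equation*}
\sup_{g}\bigl\{\Expectation{P}{g}-\GeneralizedCGF{f}{Q}{g}\bigr\}=\sup_{h}\bigl\{\Expectation{P}{h}-\Expectation{Q}{f^*(h)}\bigr\},
\end{equation*}
the $\lambda$ cancelling the shift exactly. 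Hence it suffices to prove that the reduced variational problem on the right equals $\fDivergence{P}{Q}$.

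For the inequality $\fDivergence{P}{Q}\ge\Expectation{P}{h}-\Expectation{Q}{f^*(h)}$, I would decompose $P=P_c+P_s$ with $p=\RadonNikodym{P_c}{Q}$. Since the right-hand side is $-\infty$ unless $f^*(h)$ is $Q$-integrable, we may assume $h\le f'(\infty)$. The Fenchel--Young inequality $p\,h\le f(p)+f^*(h)$, integrated against $Q$, bounds the absolutely continuous contribution by $\Expectation{Q}{f(p)}$, while $h\le f'(\infty)$ bounds the singular contribution $\Expectation{P_s}{h}$ by $f'(\infty)P_s(\mathcal{X})$. Adding the two recovers $\Expectation{P}{h}-\Expectation{Q}{f^*(h)}\le\Expectation{Q}{f(p)}+f'(\infty)P_s(\mathcal{X})=\fDivergence{P}{Q}$.

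The matching inequality is obtained by exhibiting an $h$ that saturates Fenchel--Young. The equality case of $p\,h\le f(p)+f^*(h)$ is $h=f'(p)$, for which $f^*(f'(p))=p\,f'(p)-f(p)$, so the continuous part contributes exactly $\Expectation{Q}{f(p)}$; taking $h=f'(\infty)$ on the ($Q$-null) support of $P_s$ recovers the remaining $f'(\infty)P_s(\mathcal{X})$. This produces equality and closes the identity for the supremum over all measurable functions. To upgrade it to the supremum over simple functions or over $L_Q^\infty(\mathcal{X})$, I would approximate the optimal $h=f'(p)$ by its truncations $h_k=(h\wedge k)\vee(-k)$ (themselves approximable by simple functions) and pass to the limit using monotone/dominated convergence on $\Expectation{Q}{f^*(h_k)}$; since restricting the class only shrinks the supremum, the three suprema coincide.

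The main obstacle is the singular-part bookkeeping together with this last approximation: the saturating test function $f'(p)$ need not be bounded or $Q$-integrable, and the clean cancellation on $\mathrm{supp}(P_s)$ relies on the interplay between the convention for $f^*$ outside its domain and the constraint $h\le f'(\infty)$. Carrying the truncation argument through while simultaneously keeping $\Expectation{Q}{f^*(h_k)}$ finite and recovering the full $f'(\infty)P_s(\mathcal{X})$ mass is the delicate step; by comparison, the Fenchel--Young half and the $\lambda$-reduction are routine.
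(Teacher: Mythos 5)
The paper offers no proof of this lemma---it is imported as a black box from \cite{polyanskiy2022information}---so your attempt is a from-scratch reconstruction rather than a parallel to an in-paper argument. Your $\lambda$-reduction to $\sup_h\{\Expectation{P}{h}-\Expectation{Q}{f^*(h)}\}$ is correct, and the lower-bound half (saturating Fenchel--Young with $h\in\partial f(p)$ on the absolutely continuous part and $h\uparrow f'(\infty)$ on the support of $P_s$, then truncating to reach simple/bounded functions) is sound in outline, modulo replacing $f'(p)$ by a measurable subgradient selection when $f$ is not differentiable.

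The gap is in the upper-bound half, at exactly the step you flag as delicate but then treat as settled. From ``$\Expectation{Q}{f^*(h)}<\infty$'' you may conclude $h\le f'(\infty)$ only \emph{$Q$-almost everywhere}; this says nothing about $h$ on the $Q$-null set carrying $P_s$, so the estimate $\Expectation{P_s}{h}\le f'(\infty)P_s(\mathcal{X})$ does not follow. This is not a removable technicality. Take $f(x)=|x-1|/2$ (total variation, $f'(\infty)=1/2$), $\mathcal{X}=\{0,1\}$, $P=\delta_1$, $Q=\delta_0$, and the simple function $g(1)=c$, $g(0)=0$. Then $\fDivergence{P}{Q}=1$, while $f^*(y)=\max(y,-1/2)$ for $y\le 1/2$ and $+\infty$ otherwise gives $\GeneralizedCGF{f}{Q}{g}=\inf_{\lambda}\{\lambda+f^*(-\lambda)\}=0$, so $\Expectation{P}{g}-\GeneralizedCGF{f}{Q}{g}=c\to+\infty$. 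Hence, under the usual convention that $\Expectation{Q}{\cdot}$ ignores $Q$-null sets, the identity with the supremum over all measurable (or even all simple) functions fails whenever $f'(\infty)<\infty$ and $P\not\ll Q$. The statement is rescued only by one of: assuming $f'(\infty)=+\infty$, assuming $P\ll Q$, or constraining $g-\lambda$ to take values in $\mathrm{dom}(f^*)$ \emph{pointwise everywhere} (equivalently, declaring $\Expectation{Q}{f^*(g-\lambda)}=+\infty$ as soon as $f^*(g-\lambda)=+\infty$ anywhere, which is the reading intended in the cited source). Your proof needs to adopt one of these readings explicitly; with the everywhere constraint $h\le f'(\infty)$ in hand, the singular-part bound $\Expectation{P_s}{h}\le f'(\infty)P_s(\mathcal{X})$ is immediate and the rest of your argument goes through.
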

In particular, we recover the Donsker-Varadhan variational representation of KL divergence by combining Remark~\ref{remark::cumulant generating function} and Lemma~\ref{lemma::variational representation of f divergence}:
\begin{equation}
    \KLDivergence{P}{Q} = \sup_{g}\bigl\{\Expectation{P}{g} - \log \Expectation{Q}{e^g}\bigr\}.
\end{equation}

\begin{definition}[$\Gamma$-Integral Probability Metric \cite{muller1997integral}]
    Let $\Gamma\subseteq\mathcal{M}(\mathcal{X})$ be a subset of measurable functions, then the $\Gamma$-Integral Probability Metric (IPM) between $P$ and $Q$ is defined by
    \begin{equation}
        \GammaIPM{P}{Q} \coloneq \sup_{g\in\Gamma}\bigl\{\Expectation{P}{g} - \Expectation{Q}{g}\bigr\}.
    \end{equation}
    \label{Definition::Gamma-IPM}
\end{definition}
Examples of $\Gamma$-IPM include $1$-Wasserstein distance, Dudley metric, and maximum mean discrepancy. In general, 
if $\mathcal{X}$ is a Polish space with metric $\rho$, then the $p$-Wasserstein distance between $P$ and $Q$ is defined through
\begin{equation}
    W_{p}(P,Q) = \Bigl(\inf_{\eta\in\mathcal{C}(P,Q)}\Expectation{(X,Y)\sim \eta}{\rho(X,Y)^p}\Bigr)^{1/p},
\end{equation}
where $\mathcal{C}(P,Q)$ is the set of couplings of $P$ and $Q$. For the special case $p=1$, the Wasserstein distance can be expressed as IPM due to the Kantorovich-Rubinstein Duality
\begin{equation}
    W_1(P,Q) = \sup_{\LipschitzNorm{g}\leq 1}\bigl\{\Expectation{P}{g} - \Expectation{Q}{g}\bigr\},
    \label{equation::Kantorovich-Rubinstein Duality}
\end{equation}
where $\displaystyle \LipschitzNorm{g} \coloneq \sup_{x,y\in\mathcal{X}}\frac{g(x) - g(y)}{\rho(x, y)}$ is the Lipschitz norm of $g$.

\section{Main Results}
\label{section::A general theorem}
In this section, we first propose an inequality regarding the generalization gap in Subsection~\ref{subsection::fundamental inequality}, which leads to one of our main results, a general theorem on the generalization bounds in Subsection~\ref{subsection::a general theorem}. Finally, we show the theorem admits an optimal transport interpretation in Subsection~\ref{subsection::An optimal transport interpretation}.

\subsection{An Inequality on the Generalization Gap}
\label{subsection::fundamental inequality}
In this subsection, we show the generalization gap can be bounded from above using the $\Gamma$-IPM, $f$-divergence, and the generalized CGF. For simplicity, we denote by $P_i = \ProbabilityKernel{W}{Z_i}\otimes \nu$ and $Q = Q_W \otimes \mu$. Moreover, we define the (negative) re-centered loss function as
    $\RecenteredLossFunction{w}{z} \coloneq \Expectation{\mu}{\LossFunction{w}{Z}} - \LossFunction{w}{z}.$
    

\begin{proposition}
    Let $\Bar{\Gamma}\subseteq\mathcal{M}\left(\mathcal{W}\times\mathcal{Z}\right)$ be a class of measurable functions and assume $\Bar{\ell}\in\Bar{\Gamma}$. Then for arbitrary probability distributions $\eta_i\in \mathcal{P}\left(\mathcal{W}\times \mathcal{Z}\right)$ and arbitrary positive real numbers $t_i>0$, $i\in[n]$, we have
    \begin{equation}
        \GeneralizationErrorAlgorithmic\leq \frac{1}{n}\sum_{i=1}^{n}\Bigl(\GammaBarIPM{P_i}{\eta_i}  \\
        + \frac{1}{t_i}\fDivergence{\eta_i}{Q} + \frac{1}{t_i}\GeneralizedCGF{f}{Q}{t_i\RecenteredLossFunction{W}{Z}}\Bigr).
    \label{equation::fundamental inequality}
    \end{equation}    
\label{proposition::fundamental inequality}
\end{proposition}
\begin{proof}[Proof of Proposition~\ref{proposition::fundamental inequality}]
    If $F^*$ is the Legendre dual of some functional $F:\mathcal{X}\to \RealNumber$, then we have
    \begin{equation}
        (tF)^*(x^*) = tF^*\left(\frac{1}{t}x^*\right),
        \label{equation::appendix::legendre dual of tF}
    \end{equation}
    for all $t\in\RealNumberNonnegative$ and $x^*\in \mathcal{X}^*$, the dual space of $\mathcal{X}$.
    Let $Q$ be a fixed reference distribution, $\eta$ be a probability distribution, and $g$ be a measurable function. Combining the above fact with Lemma~\ref{lemma::variational representation of f divergence} yields the following Fenchel-Young inequality:
    \begin{equation}
        \Expectation{\eta}{g}\leq \frac{1}{t}\fDivergence{\eta}{Q} + \frac{1}{t}\GeneralizedCGF{f}{Q}{tg}, t\in\RealNumberNonnegative.
        \label{equation::Fenchel-Young inequality involving t}
    \end{equation}
    As a consequence, we have
        \begin{align}
    \GeneralizationErrorAlgorithmic 
        &= \Expectation{\ProbabilityKernel{W}{Z^n}\otimes\nu^{\otimes n}}{\Expectation{\mu}{\LossFunction{W}{Z}} - \frac{1}{n}\sum_{i=1}^{n}\LossFunction{W}{Z_i}} \\
        &= \frac{1}{n} \sum_{i=1}^n \Expectation{P_i}{\RecenteredLossFunction{W}{Z_i}}
        \label{eq:prop1-eq4}\\
        &\leq \frac{1}{n} \sum_{i=1}^n \Expectation{P_i}{\RecenteredLossFunction{W}{Z_i}} - \Expectation{\eta_i}{\RecenteredLossFunction{W}{Z_i}}  + \frac{1}{t_i}\Bigl(\fDivergence{\eta_i}{Q} + \GeneralizedCGF{f}{Q}{t_i\RecenteredLossFunction{W}{Z_i}}\Bigr)  \label{eq:prop1-eq1}\\
        &\leq \frac{1}{n} \sum_{i=1}^n \sup_{g\in\Bar{\Gamma}}\bigl\{\Expectation{P_i}{g} - \Expectation{\eta_i}{g}
        \bigr\}+ \frac{1}{t_i}\Bigl(\fDivergence{\eta_i}{Q} + \GeneralizedCGF{f}{Q}{t_i\RecenteredLossFunction{W}{Z_i}}\Bigr)  \label{eq:prop1-eq2}\\
        &= \text{RHS of}~\eqref{equation::fundamental inequality}. \nonumber
    \end{align}
    Here, inequality~\eqref{eq:prop1-eq1} follows from \eqref{equation::Fenchel-Young inequality involving t} and inequality~\eqref{eq:prop1-eq2} follows since $\Bar{\ell}\in\Bar{\Gamma}$, and the last equality follows by Definition~\ref{Definition::Gamma-IPM}. 
\end{proof}
Proposition~\ref{proposition::fundamental inequality} has a close relationship with the $(f,\Gamma)$-divergence \cite{birrell2022f}. 
In Appendix~\ref{subsection::appendix::proof of the fundamental inequality}, We provide an alternative proof of Proposition~\ref{proposition::fundamental inequality} using $(f,\Gamma)$-divergence. Furthermore, we show the inequality in Proposition~\ref{proposition::fundamental inequality} is tight in Appendix~\ref{subsection::appendix::tightness of the fundamental inequality}.

\subsection{A Theorem for OOD Generalization}
\label{subsection::a general theorem}
It is common that the generalized CGF $\GeneralizedCGF{f}{Q}{t\Bar{\ell}}$ does not admit an analytical expression, resulting in the lack of closed-form expression in Proposition~\ref{proposition::fundamental inequality}. 
This problem can be remedied by finding a convex upper bound $\psi(t)$ of $\GeneralizedCGF{f}{Q}{t\Bar{\ell}}$, as clarified in Theorem~\ref{theorem::the general theorem}.
Section~\ref{section::corollaries} provides many cases where $\psi$ is quadratic and Theorem~\ref{theorem::the general theorem} can be further simplified.

\begin{theorem}
Let $\Bar{\ell}\in\Bar{\Gamma}\subseteq\mathcal{M}(\mathcal{W}\times\mathcal{Z})$ and $0<b\leq +\infty$. If there exists a continuous convex function $\psi:[0,+\infty)\to[0,+\infty)$ satisfying $\psi(0) = \psi'(0) = 0$ and $\GeneralizedCGF{f}{Q}{t\Bar{\ell}}\leq \psi(t)$ for all $t\in (0, b)$. Then we have
    \begin{equation}
        \GeneralizationErrorAlgorithmic \leq \frac{1}{n}\sum_{i=1}^{n} \inf_{\eta_i\in\mathcal{P}(\mathcal{W}\times\mathcal{Z})}
        \Bigl\{\GammaBarIPM{P_i}{\eta_i}  + (\psi^*)^{-1}\left(\fDivergence{\eta_i}{Q}\right)\Bigr\},
        \label{equation::the general theorem}
    \end{equation}
    \label{theorem::the general theorem}
    where $\psi^*$ denotes the Legendre dual of $\psi$ and $(\psi^*)^{-1}$ denotes the generalized inverse of $\psi^*$.
\end{theorem}
\begin{remark}
    Technically we can replace $\Bar{\ell}$ with $-\Bar{\ell}$ and prove an upper bound of $-\GeneralizationErrorAlgorithmic$ by a similar argument. This result together with Theorem~\ref{theorem::the general theorem} can be regarded as an extension of the previous result~\cite[Theorem 2]{bu2020tightening}. 
    Specifically, the extensions are two-fold. First, \cite{bu2020tightening} only considered the KL-divergence while our result interpolates freely between IPM and $f$-divergence. Second, \cite{bu2020tightening} only considered the in-distribution generalization while our result applies to the OOD generalization, including the case where the training distribution is not absolutely continuous \wrt the testing distribution.
    \label{remark::bound the generalization error from the other side}
\end{remark}

\begin{proof}[Proof of Theorem~\ref{theorem::the general theorem}]
    We first invoke a key lemma.
    \begin{lemma}[Lemma 2.4 in~\cite{boucheron2013concentration}]
        Let $\psi$ be a convex and continuously differentiable function defined on the interval $[0, b)$, where $0 < b \leq +\infty$. Assume that $\psi(0) = \psi'(0) = 0$ and for every $t \geq 0$, let $\psi^*(t) = \sup_{\lambda\in(0,b)}\left\{\lambda t - \psi(\lambda)\right\}$ be the Legendre dual of $\psi$.
        Then the generalized inverse of $\psi^*$, defined by $\left(\psi^*\right)^{-1}(y) \coloneq \inf\left\{t\geq0:\psi^*(t) > y\right\}$, can also be written as 
        \begin{equation}
            \left(\psi^*\right)^{-1}(y) = \inf_{\lambda\in(0,b)}\frac{y+\psi(\lambda)}{\lambda}.
        \end{equation}
        \label{lemma::inverse of psi star}
    \end{lemma}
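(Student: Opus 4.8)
The plan is to compute the superlevel set $\{t \ge 0 : \psi^*(t) > y\}$ of $\psi^*$ directly and simply read off its infimum, rather than inverting $\psi^*$ analytically. First I would unfold the definition $\psi^*(t) = \sup_{\lambda \in (0,b)}\{\lambda t - \psi(\lambda)\}$ and invoke the elementary fact that a supremum strictly exceeds a level $y$ if and only if at least one of the terms does (this holds for any supremum, attained or not). That is, $\psi^*(t) > y$ holds precisely when there exists $\lambda \in (0,b)$ with $\lambda t - \psi(\lambda) > y$.

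Next, since $\lambda > 0$, I would rearrange $\lambda t - \psi(\lambda) > y$ into the equivalent form $t > (y + \psi(\lambda))/\lambda$. The existential quantifier over $\lambda$ then collapses into a single infimum: there is some $\lambda \in (0,b)$ with $t > (y+\psi(\lambda))/\lambda$ if and only if $t > \inf_{\lambda \in (0,b)} (y+\psi(\lambda))/\lambda =: g(y)$, where $g(y)$ is exactly the right-hand side of the claimed identity. I conclude that the superlevel set equals the open half-line $\{t : \psi^*(t) > y\} = (g(y), +\infty)$.

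Finally I would verify that $g(y) \ge 0$ whenever $y \ge 0$, so that restricting to $t \ge 0$ leaves the infimum unchanged. This follows because convexity of $\psi$ together with $\psi(0) = \psi'(0) = 0$ gives $\psi(\lambda) \ge \psi(0) + \psi'(0)\lambda = 0$ for all $\lambda \in (0,b)$; hence each quotient $(y+\psi(\lambda))/\lambda$ is nonnegative and so is $g(y)$. This also shows $\psi^* \ge 0$ (letting $\lambda \to 0^+$). Consequently $(\psi^*)^{-1}(y) = \inf\{t \ge 0 : \psi^*(t) > y\} = \inf (g(y),+\infty) = g(y)$, which is precisely the asserted formula.

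The only delicate point — and the step I expect to require the most care — is the passage between the strict inequalities and the infimum/supremum. One must justify that $\sup_\lambda(\cdot) > y$ is equivalent to the existence of a witnessing $\lambda$, and that $\exists\,\lambda: t > h(\lambda)$ is equivalent to $t > \inf_\lambda h(\lambda)$ with a \emph{strict} inequality (not $\ge$), so that the half-line stays open and its infimum is $g(y)$ itself. The behavior as $\lambda \to 0^+$, where the constraint $\lambda \in (0,b)$ excludes the endpoint, must also be checked to confirm both that nonnegativity is preserved and that dropping $\lambda = 0$ from the range is harmless.
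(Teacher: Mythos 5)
Your proof is correct. Note that the paper itself offers no proof of this statement: it is quoted verbatim as Lemma~2.4 of Boucheron, Lugosi, and Massart's \emph{Concentration Inequalities} and used as a black box. Your argument --- identifying the superlevel set $\{t\geq 0:\psi^*(t)>y\}$ as the open half-line $\left(\inf_{\lambda\in(0,b)}\tfrac{y+\psi(\lambda)}{\lambda},\,+\infty\right)$ via the two strict-inequality equivalences, and using $\psi\geq 0$ (from convexity and $\psi(0)=\psi'(0)=0$) to check that the restriction to $t\geq 0$ is harmless --- is essentially the standard textbook proof, and you correctly flag and handle the only delicate points (strictness of the inequalities and the behavior as $\lambda\to 0^+$).
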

    As a consequence of Lemma~\ref{lemma::inverse of psi star}, we have 
    \begin{align}
    \GeneralizationErrorAlgorithmic 
        & \leq \frac{1}{n}\sum_{i=1}^{n} \inf_{\eta_i \in \mathcal{P}(\mathcal{W}\times\mathcal{Z}),\;t_i\in\RealNumberNonnegative} \Bigl\{\GammaBarIPM{P_i}{\eta_i}+ \frac{1}{t_i}\fDivergence{\eta_i}{Q} + \frac{1}{t_i}\GeneralizedCGF{f}{Q}{t_i\RecenteredLossFunction{W}{Z}}\Bigr\} \\
        & \leq \frac{1}{n}\sum_{i=1}^{n} \inf_{\eta_i}\inf_{t_i}\biggl\{\GammaBarIPM{P_i}{\eta_i} + \frac{\fDivergence{\eta_i}{Q}+\psi(t_i)}{t_i}\biggr\} \\
        & = \text{RHS of }(\ref{equation::the general theorem}), \nonumber 
    \end{align}
    where the first inequality follows by Proposition~\ref{proposition::fundamental inequality} and the last equality follows by Lemma~\ref{lemma::inverse of psi star}. 
\end{proof}

In general, compared with checking $\Bar{\ell}\in\Bar{\Gamma}$, it is more direct to check that $\ell\in\Gamma$ for some $\Gamma\subseteq\mathcal{M}(\mathcal{W}\times\mathcal{Z})$. 
If so, we can choose\footnote{Note that $\Gamma-\Gamma\neq 0$, it is the set consists of $g - g'$ \st both $g$ and $g'$ belong to $\Gamma$.} $\Bar{\Gamma} = \Gamma - \Gamma$. 
If we further assume that $\Gamma$ is symmetric, \ie, $\Gamma=-\Gamma$, then we have $\Bar{\Gamma} = 2\Gamma$ and thus 
\begin{equation}
    \GammaBarIPM{P_i}{\eta_i} = 2\GammaIPM{P_i}{\eta_i}.
    \label{equation::the general theorem::2 Gamma}
\end{equation}
The following corollary says whenever inserting~\eqref{equation::the general theorem::2 Gamma} into generalization bounds~\eqref{equation::the general theorem}, the coefficient 2 can be removed under certain conditions. 
\begin{corollary}
    Suppose $\ell\in\Gamma\subseteq\mathcal{M}(\mathcal{W}\times\mathcal{Z})$ and $\Gamma$ be symmetric. 
    Let $P_W$ be the distribution of the algorithmic output $W$ and $\mathcal{C}\left(P_W, \cdot\right)\subseteq\mathcal{P}\left(\mathcal{W}\times\mathcal{Z}\right)$ be a class of distributions whose marginal distribution on $\mathcal{W}$ is $P_W$, then we have 
    \begin{equation}
        \GeneralizationErrorAlgorithmic \leq \frac{1}{n}\sum_{i=1}^{n} \inf_{\eta_i\in\mathcal{C}\left(P_W, \cdot\right)} \\
        \Bigl\{\GammaIPM{P_i}{\eta_i}  + (\psi^*)^{-1}\left(\fDivergence{\eta_i}{Q}\right)\Bigr\}.
        \label{equation::the general theorem::Gamma}
    \end{equation}
    \label{corollary::the general theorem::Gamma}
\end{corollary}


\begin{proof}
    By inequality~\eqref{eq:prop1-eq1}, it suffices to prove
    \begin{equation}
        \Expectation{P_i}{\RecenteredLossFunction{W}{Z_i}} - \Expectation{\eta_i}{\RecenteredLossFunction{W}{Z_i}}
        \leq \GammaIPM{P_i}{\eta_i}.
        \label{equation::proof of the general theorem::Gamma::target}
    \end{equation}
    If so, \eqref{equation::the general theorem::Gamma} will follow by exploiting Lemma~\ref{lemma::inverse of psi star} and optimizing over $t_i$ in~\eqref{eq:prop1-eq1}. Since $\eta_i\in\mathcal{C}\left(P_W,\cdot\right)$, the left-hand side of~\eqref{equation::proof of the general theorem::Gamma::target} is exactly $(\Expectation{\eta_i}{\ell} - \Expectation{P_i}{\ell})$. Thus~\eqref{equation::proof of the general theorem::Gamma::target} follows by $\ell\in\Gamma$ and by the symmetry of $\Gamma$.
\end{proof}
\subsection{An Optimal Transport Interpretation of Theorem~\ref{theorem::the general theorem}}
\label{subsection::An optimal transport interpretation}
    
Intuitively, a learning algorithm generalizes well in the OOD setting if the following two conditions hold simultaneously:
\begin{enumerate*}
    \item The training distribution $\nu$ is close to the testing distribution $\mu$.
    \item The posterior distribution $\ProbabilityKernel{W}{Z_i}$ is close to the prior distribution $Q_W$.
\end{enumerate*}
The second condition can be interpreted as the ``algorithmic stability'' and has been studied by a line of work \cite{raginsky2016information, feldman2018calibrating}.
The two conditions together imply that the learning algorithm generalizes well if $P_i$ is close to $Q$. The right-hand side of (\ref{equation::the general theorem}) can be regarded as a characterization of the ``closeness'' between $P_i$ and $Q$. 
Moreover, inspired by \cite{birrell2022f}, we provide an optimal transport interpretation to the generalization bound (\ref{equation::the general theorem}). 
Consider the task of moving (or reshaping) a pile of dirt whose shape is characterized by distribution $Q$, to another pile of dirt whose shape is characterized by $P_i$. 
Decompose the task into two phases as follows. 
During the first phase, we move $Q$ to $\eta_i$ and this yields an $f$-divergence-type transport cost $\left(\psi^*\right)^{-1}\left(\fDivergence{\eta_i}{Q}\right)$, which is a monotonously increasing transformation of $\fDivergence{\eta_i}{Q}$ (see Lemma~\ref{lemma::inverse of psi star}). During the second phase, we move $\eta_i$ to $P_i$ and this yields an IPM-type transport cost $\GammaIPM{P_i}{\eta_i}$. The total cost is the sum of the two-phased costs and is optimized over all intermediate distributions $\eta_i$. 

In particular, we can say more if both $f$ and $\psi$ are super-linear. By assumption, the $f$-divergence is given by~\eqref{equation::f divergence superlinear} and we have $\left(\psi^*\right)^{-1}(+\infty) = +\infty$. This implies we require $\eta_i \ll Q$ to ensure the cost is finite. In other words, $\eta_i$ is a ``continuous deformation'' of $Q$ and cannot assign mass outside the support of $Q$. On the other hand, if we decompose $P_i$ into $P_i = P_{i}^c + P_{i}^s$, where $P_{i}^c\ll Q$ and $P_{i}^s \perp Q$, then all the mass of $P_{i}^s$ is transported during the second phase.

\section{Special Cases and Examples}
\label{section::corollaries}
In this section, we demonstrate how a series of generalization bounds, including both PP-type and PE-type, can be derived through Theorem~\ref{theorem::the general theorem} and its Corollary~\ref{corollary::the general theorem::Gamma}. For simplicity, we defer all the proofs in this section to the Appendix~\ref{section::proofs of the special cases}.

\subsection{Population-Empirical Generalization Bounds}
\label{subsection::P-E generalization bounds}
This subsection bounds the PE generalization gap defined in~\eqref{equation::algorithmic generalization error}. In particular, the PE bounds can be divided into two classes: the IPM-type bounds and the $f$-divergence-type bounds.

\subsubsection{IPM-Type Bounds}
\label{subsubsection::IPM Type Bounds}
From the Bayes perspective, $Q_W$ is the prior distribution of the hypothesis and thus is fixed at the beginning.
Technically, however, the derived generalization bounds hold for arbitrary $Q_W$ and we can optimize over $Q_W$ to further tighten the generalization bounds.
In particular, set $Q_W=P_W$, $\eta_i=Q$, and let $\Gamma$ be the set of $(L_W, L_Z)$-Lipschitz functions. Applying Corollary~\ref{corollary::the general theorem::Gamma} establishes the Wasserstein distance generalization bound.
\begin{corollary}[Wasserstein Distance Bounds for Lipschitz Loss Functions]
    If the loss function is $(L_W, L_Z)$-Lipschitz, i.e., $\ell$ is $L_W$-Lipschitz on $\mathcal{W}$ for all $z\in\mathcal{Z}$ and $L_Z$-Lipschitz on $\mathcal{Z}$ for all $w\in\mathcal{W}$, then we have
    \begin{equation}
        \GeneralizationErrorAlgorithmic \leq L_Z W_1(\nu,\mu) \\
        + \frac{L_W}{n}\sum_{i=1}^{n} \Expectation{\nu}{W_1\left(\ProbabilityKernel{W}{Z_i}, P_W\right)}.
        \label{equation::expectational generalization bounds::Wasserstein}
    \end{equation}%
    \label{corollary::expectational generalization bounds::Wasserstein}%
\end{corollary}
Set $Q_W=P_W$, $\eta_i=Q$, and $\Gamma = \left\{g: a\leq g \leq b\right\}$. Applying Corollary~\ref{corollary::the general theorem::Gamma} establishes the total variation generalization bound. 
\begin{corollary}[Total Variation Bounds for Bounded Loss Function]
    If the loss function is uniformly bounded: $\LossFunction{w}{z}\in [a,b]$, for all $w\in\mathcal{W}$ and $z\in\mathcal{Z}$, then
    \begin{align}    
        \GeneralizationErrorAlgorithmic
        & \leq\frac{b-a}{n}\sum_{i=1}^n \TotalVariation{P_i}{Q} 
        \label{equation::expected generalization bounds::TV::eq-1}\\
        & \leq (b-a)\cdot \TotalVariation{\nu}{\mu}+\frac{b-a}{n}\sum_{i=1}^n \Expectation{\nu}{\TotalVariation{\ProbabilityKernel{W}{Z_i}}{P_W}}.
        \label{equation::expectational generalization bounds::TV::eq-2}
    \end{align}
    \label{corollary::expectational generalization bounds::TV}  
\end{corollary}
Similar results have been \renjie{obtained by others in the context of domain adaptation \cite[Theorem 5.2 and Corollary 5.2]{wang2022information} and transfer learning \cite[Theorem 5 and Corollary 6]{wu2020information}}. In essence, these results are equivalent.

\subsubsection{$f$-Divergence-Type Bounds}
\label{subsubsection::f divergence type bounds}
Set $f(x) = x\log x - (x - 1)$ and $\eta_i = P_i$. For $\sigma$-sub-Gaussian loss functions, we can choose $\psi(t) = \frac{1}{2}\sigma^2 t^2$ and thus $\left(\psi^*\right)^{-1}(y) = \sqrt{2\sigma^2y}$. This recovers the KL-divergence generalization bound \cite{masiha2021learning, wu2020information, wang2022information}.
\begin{corollary}[KL Bounds for sub-Gaussian Loss Functions]
    If the loss function is $\sigma$-sub-Gaussian for all $w\in\mathcal{W}$, we have 
    \begin{equation}
        \GeneralizationErrorAlgorithmic \leq \frac{1}{n}\sum_{i=1}^{n} \sqrt{2\sigma^2\left(I(W;Z_i) + \KLDivergence{\nu}{\mu}\right)},
        \label{equation::expectational generalization bounds::KL sub gaussian}
    \end{equation}
    where $I(W;Z_i)$ is the mutual information between $W$ and $Z_i$.
    \label{corollary::expectational generalization bounds::KL sub gaussian}
\end{corollary}
If the loss function is $(\sigma, c)$-sub-gamma, we can choose $\psi(t) = \frac{t^2}{2(1-ct)}$, $t\in[0, \frac{1}{c})$, and thus $\left(\psi^*\right)^{-1}(y) = \sqrt{2\sigma^2y} + cy$. In particular, the sub-Gaussian case corresponds to $c=0$. 
\begin{corollary}[KL Bounds for sub-gamma Loss Functions]
    If the loss function is $(\sigma,c)$-sub-gamma for all $w\in\mathcal{W}$, we have 
    \begin{equation}
        \GeneralizationErrorAlgorithmic \leq \frac{1}{n}\sum_{i=1}^{n} \sqrt{2\sigma^2\left(I(W;Z_i) + \KLDivergence{\nu}{\mu}\right)} \\
        + c\bigl(I(W;Z_i) + \KLDivergence{\nu}{\mu}\bigr).
    \end{equation}
    \label{corollary::expectational generalization bounds::KL sub gamma}
\end{corollary}
Setting $f(x) = (x-1)^2$ and $\eta_i = P_i$, we establish the $\chi^2$-divergence bound. 
\begin{corollary}[$\chi^2$ Bounds]
    If the variance $\mathrm{Var}_{\mu}\LossFunction{w}{Z}\leq \sigma^2$ for all $w\in\mathcal{W}$, we have 
    \vspace{-.5em}
    \begin{equation}
        \GeneralizationErrorAlgorithmic\leq \frac{1}{n}\sum_{i=1}^{n}\sqrt{\sigma^2\ChiSquareDivergence{P_i}{Q}}.
        \label{equation::chi square generalization bounds}
    \end{equation}
    In particular, by the chain rule of $\chi^2$-divergence, we have
    \begin{equation}
        \GeneralizationErrorAlgorithmic\leq \frac{1}{n}\sum_{i=1}^{n}\sigma \\
        \sqrt{\Bigl(1 + \sup_{z\in\mathcal{Z}}\ChiSquareDivergence{\ProbabilityKernel{W}{Z_i = z}}{Q_W}\Bigr)
            \bigl(1 + \ChiSquareDivergence{\nu}{\mu}\bigr) - 1}.
        \label{equation::chi square generalization bounds::chain rule}
    \end{equation}
    \label{corollary::expectational generalization bounds::Chi Square}
\end{corollary}
In the remaining part of this subsection, we focus on the bounded loss function. Thanks to Theorem~\ref{theorem::the general theorem}, we need a convex upper bound $\psi(t)$ of the generalized CGF $\GeneralizedCGF{f}{Q}{t\Bar{\ell}}$.  
The following lemma says that $\psi(t)$ is quadratic if $f$ satisfies certain conditions.
\begin{lemma}[Corollary 92 in\cite{agrawal2021optimal}]
    Suppose the loss function $\ell(w,z)\in[a, b],\ \forall w\in\mathcal{W},\ z\in \mathcal{Z}$, $f$ is strictly convex and twice differentiable on its domain, thrice differentiable at 1, and
    \begin{equation}
        \dfrac{27f''(1)}{\left(3 - xf'''(1)/f''(1)\right)^{3}} \leq f''(1+x),
        \label{equation::conditions on the f}
    \end{equation}
    for all $x\geq -1$. Then 
    $\GeneralizedCGF{f}{Q}{t\Bar{\ell}}\leq \dfrac{1}{2}\sigma_f^2t^2$, where $\sigma_f = \dfrac{(b-a)}{2\sqrt{f''(1)}}$.
\label{lemma::quadratic psi for bounded loss function}
\end{lemma}
In Table~\ref{table::comparison between f divergences}, we summarize some common $f$-divergence and check whether condition~\eqref{equation::conditions on the f} is satisfied. As a result of Lemma~\ref{lemma::quadratic psi for bounded loss function}, we have the following corollary.
\begin{corollary}
    Let $\LossFunction{w}{z}\in[a,b]$ for all $w\in\mathcal{W}$ and $z\in\mathcal{Z}$. If $f$ satisfies the conditions in Lemma~\ref{lemma::quadratic psi for bounded loss function}, we have
    \begin{equation}
        \GeneralizationErrorAlgorithmic \leq \frac{1}{n}\sum_{i=1}^n\sqrt{2\sigma_f^2\fDivergence{P_i}{Q}}.
        \label{equation::expected generalization bounds::f divergence and bounded losses}
    \end{equation}
    Some common $f$-divergence and the corresponding coefficient $\sigma_f$ are given by Table \ref{table::correspondence between f divergence and the coefficients}.
    \label{corollary::expectational generalization bounds::f-divergence with bounded loss}
\end{corollary}

\begin{table}[t]
\centering
\begin{threeparttable}
  \caption{Comparison Between $f$-Divergences }
  \label{table::comparison between f divergences}
  \centering
  \begin{tabularx}{0.8\textwidth}{
        >{\hsize=0.28\hsize\raggedright\arraybackslash}X
        >{\hsize=0.48\hsize\centering\arraybackslash}X
        >{\hsize=0.18\hsize\centering\arraybackslash}X
        }
    \toprule
    $f$-Divergence & $f(x)$ & Condition (\ref{equation::conditions on the f}) holds? \\
    \midrule
    $\alpha$-Divergence & $\dfrac{x^\alpha - \alpha x + \alpha - 1}{\alpha(\alpha-1)}$
    &  Only for $\alpha\in[-1, 2]$ \\
    \midrule
    $\chi^2$-Divergence & $(x-1)^2$ & Yes  \\
    \midrule
    KL-Divergence & $x\log x - (x-1)$ &  Yes \\
    \midrule
    Squared Hellinger & $(\sqrt{x} - 1)^2$ & Yes   \\ 
    \midrule
    Reversed KL & $-\log x +x-1$ & Yes  \\
    \midrule
    Jensen-Shannon(with parameter $\theta$)& $\theta x \log x - (\theta x + 1 - \theta)\log(\theta x + 1 - \theta)$ &  Yes\\
    \midrule
    Le Cam& $\dfrac{1-x}{2(1+x)}+\dfrac{1}{4}(x-1)$ & Yes  \\
    \bottomrule
  \end{tabularx}
  \begin{tablenotes}
    \item [1] All $f$ in Table \ref{table::comparison between f divergences} are set to be standard, \ie, $f'(1) = f(1) = 0$.
    \item [2] Both the $\chi^2$-divergence and the squared Hellinger divergence are $\alpha$-divergence, up to a multiplicative constant. In particular, we have $\chi^2 = 2D_2$ and $H^2 = \frac{1}{2}D_{1/2}$. The $\theta$-Jensen-Shannon divergence has the form of $D_{\mathrm{JS}(\theta)}(P||Q) = \theta\KLDivergence{P}{R(\theta)}+ (1-\theta)\KLDivergence{Q}{R(\theta)}$, where $R(\theta)\coloneq \theta P + (1-\theta)Q$ and $\theta\in(0,1)$. The classical Jensen-Shannon divergence corresponds to $\theta=1/2$. 
  \end{tablenotes}
\end{threeparttable}
\end{table}
\begin{table}[hb]
    \centering
    \begin{threeparttable}
        \centering
        \caption{Correspondence of $D_f$ and $\sigma_f$}
        \label{table::correspondence between f divergence and the coefficients}
        \begin{tabular}{c|c c c c}
            \toprule
            $D_f$  &  $D_\alpha$ $(\alpha\in[-1, 2])$ & KL & $\chi^2$ & $H^2$  \\
            \hline
            $\sigma_f$  & $\dfrac{b-a}{2}$  & $\dfrac{b-a}{2}$ & $\dfrac{b-a}{2\sqrt{2}}$ & $ \dfrac{b-a}{\sqrt{2}} $\\
            \midrule
            $D_f$  &  Reversed KL & JS$(\theta)$ & Le Cam & \\
            \hline
            $\sigma_f$  & $\dfrac{b-a}{2}$ & $\dfrac{b-a}{2\sqrt{\theta(1-\theta)}}$ & $b-a$ & \\
            \bottomrule
        \end{tabular}
    \end{threeparttable}
\end{table}
We end this subsection with some remarks.
First, since Corollary~\ref{corollary::expectational generalization bounds::TV} also considers the bounded loss function, it is natural to ask whether we can compare~\eqref{equation::expected generalization bounds::TV::eq-1} and~\eqref{equation::expected generalization bounds::f divergence and bounded losses}. The answer is affirmative and we always have 
\begin{equation}
    \TotalVariation{P_i}{Q} \leq \sqrt{2\sigma_f^2\fDivergence{P_i}{Q}}.
    \label{equation::total variation is tighter}
\end{equation}
This Pinsker-type inequality is given by \cite{agrawal2021optimal}. Thus the bound in~\eqref{equation::expected generalization bounds::TV::eq-1} is always tighter than that in~\eqref{equation::expected generalization bounds::f divergence and bounded losses}.
Secondly, as previously mentioned, we optimize $Q_W$ to tighten the bounds.   
In some examples, \eg, the KL-bound, the optimal $Q_W$ is achieved at $P_W$, but it is not always the case, \eg, the $\chi^2$-bound. 
Lastly, all the results derived in this subsection encompass the in-distribution generalization as a special case, by simply setting $\nu=\mu$. 
If we further set $Q_W = P_W$, then we establish a series of in-distribution generalization bounds by simply replacing $\fDivergence{P_i}{Q}$ with $I_f(W; Z_i)$, the $f$-mutual information between $W$ and $Z_i$.


\subsection{Population-Population Generalization Bounds}
\label{subsection::p-p generalization bounds}
By setting $Q_W=P_W$, $\eta_i = P_W\otimes \nu$, and $\Bar{\Gamma}=\{\Bar{\ell}\}$, Theorem~\ref{theorem::the general theorem} specializes to a family of $f$-divergence-type PP generalization bounds. See Appendix~\ref{subsection::appendix::proof of PP generaliation bounds} for proof.
\begin{corollary}[PP Generalization Bounds] Let $\psi$ be defined in Theorem~\ref{theorem::the general theorem}. If $\GeneralizedCGF{f}{Q}{t\Bar{\ell}(W,Z)}\leq \psi(t)$, then we have
    \begin{equation}
        \PPGeneralizationErrorAlgorithmic \leq \left(\psi^*\right)^{-1}\left(\fDivergence{\nu}{\mu}\right).
        \label{equation::pp generalization bounds}
    \end{equation}
    \label{corollary::pp generalization bounds}
\end{corollary}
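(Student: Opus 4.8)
The plan is to obtain the bound as a direct specialization of Theorem~\ref{theorem::the general theorem}, exploiting a cancellation that appears once $\Bar{\Gamma}$ is taken to be the singleton $\{\Bar{\ell}\}$. First I would record the identity that, with $Q_W=P_W$, the PP gap is itself an expectation of the recentered loss. Unfolding $\RecenteredLossFunction{W}{Z}=\Expectation{\mu}{\LossFunction{W}{Z}}-\LossFunction{W}{Z}$ and applying Fubini gives
\[
\Expectation{P_W\otimes\nu}{\Bar{\ell}}=\Expectation{P_W}{\Expectation{\mu}{\LossFunction{W}{Z}}-\Expectation{\nu}{\LossFunction{W}{Z}}}=\PPGeneralizationErrorAlgorithmic .
\]
The same coordinate-wise manipulation, applied to the empirical average, shows that the PE gap satisfies $\GeneralizationErrorAlgorithmic=\frac{1}{n}\sum_{i=1}^{n}\Expectation{P_i}{\Bar{\ell}}$, a fact I will need for the cancellation below.

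Next I would invoke Theorem~\ref{theorem::the general theorem} with the prescribed choices $Q_W=P_W$, $\eta_i=P_W\otimes\nu$, and $\Bar{\Gamma}=\{\Bar{\ell}\}$. Since $\Bar{\Gamma}$ is a singleton containing $\Bar{\ell}$, the hypothesis $\Bar{\ell}\in\Bar{\Gamma}$ holds trivially and the IPM term collapses to a difference of expectations, $\GammaBarIPM{P_i}{\eta_i}=\Expectation{P_i}{\Bar{\ell}}-\Expectation{\eta_i}{\Bar{\ell}}$. Choosing the particular feasible point $\eta_i=P_W\otimes\nu$ inside the infimum and using the theorem's conclusion yields
\[
\tfrac{1}{n}\textstyle\sum_{i}\Expectation{P_i}{\Bar{\ell}}\le \tfrac{1}{n}\textstyle\sum_{i}\Bigl(\Expectation{P_i}{\Bar{\ell}}-\Expectation{\eta_i}{\Bar{\ell}}+(\psi^*)^{-1}\!\bigl(\fDivergence{\eta_i}{Q}\bigr)\Bigr).
\]
The terms $\frac{1}{n}\sum_i\Expectation{P_i}{\Bar{\ell}}$ cancel on both sides, leaving $\frac{1}{n}\sum_i\Expectation{\eta_i}{\Bar{\ell}}\le\frac{1}{n}\sum_i(\psi^*)^{-1}(\fDivergence{\eta_i}{Q})$. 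Because every $\eta_i$ equals $P_W\otimes\nu$, the left-hand side is exactly $\PPGeneralizationErrorAlgorithmic$ by the identity of the first paragraph.

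Finally I would simplify the right-hand side by tensorization of the $f$-divergence. Since $P_W\otimes\nu$ and $Q=P_W\otimes\mu$ share their first marginal, the Radon--Nikodym derivative depends only on $z$, namely $\tfrac{\mathrm{d}(P_W\otimes\nu)}{\mathrm{d}(P_W\otimes\mu)}(w,z)=\RadonNikodym{\nu}{\mu}(z)$, and integrating out $w$ (with the singular part's mass preserved, as $P_W$ is a probability measure) gives $\fDivergence{P_W\otimes\nu}{P_W\otimes\mu}=\fDivergence{\nu}{\mu}$. Substituting produces the claimed bound $\PPGeneralizationErrorAlgorithmic\le(\psi^*)^{-1}(\fDivergence{\nu}{\mu})$.

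The argument has no genuine analytic obstacle; its content is the cancellation trick together with two routine verifications. The one point requiring care is the tensorization identity when $\nu\not\ll\mu$, where the $f'(\infty)$ singular term must be tracked, but it factors cleanly since the shared marginal contributes unit mass. I would also note in passing that plugging the specific $\eta_i$ into the infimum is legitimate precisely because the right-hand side of Theorem~\ref{theorem::the general theorem} is an infimum over feasible $\eta_i$, so any fixed choice yields a valid upper bound.
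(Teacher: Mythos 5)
Your proposal is correct and follows essentially the same route as the paper: apply Theorem~\ref{theorem::the general theorem} with $Q_W=P_W$, $\Bar{\Gamma}=\{\Bar{\ell}\}$, and the feasible point $\eta_i=P_W\otimes\nu$, cancel the common term $\Expectation{P_i}{\Bar{\ell}}$ from both sides, and use $\fDivergence{P_W\otimes\nu}{P_W\otimes\mu}=\fDivergence{\nu}{\mu}$. The only (welcome) difference is that you make explicit the tensorization identity and the identification $\Expectation{P_W\otimes\nu}{\Bar{\ell}}=\PPGeneralizationErrorAlgorithmic$, which the paper leaves implicit.
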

\vspace{-1.5em}
By Corollary~\ref{corollary::pp generalization bounds}, each $f$-divergence-type PE bound provided in Section~\ref{subsubsection::f divergence type bounds} possesses a PP generalization bound counterpart, with $\fDivergence{P_i}{Q}$ replaced by $\fDivergence{\nu}{\mu}$.
In particular, under the KL case, we recover the results in~\cite[Theorem 4.1]{wang2022information} if the loss function is $\sigma$-sub-Gaussian:
\begin{equation}|\PPGeneralizationErrorAlgorithmic| \leq \sqrt{2\sigma^2\KLDivergence{\nu}{\mu}},
\end{equation}
where the absolute value comes from the symmetry of the sub-Gaussian distribution. 
The remaining PP generalization bounds are summarized in Table~\ref{table::f divergence PP bounds}. 
\begin{table}[ht]
    \centering
    \caption{$f$-Divergence Bounds of the PP Generalization Gap}
    \label{table::f divergence PP bounds}
    \begin{tabular}{c c }
        \toprule
        Assumptions  & PP Generalization Bounds \\
        \midrule
        \midrule
        $\ell$ is $(\sigma, c)$-sub-gamma & $\sqrt{2\sigma^2\KLDivergence{\nu}{\mu}} + c\KLDivergence{\nu}{\mu}$ \\
        \hline
        $\mathrm{Var}_{\mu}\LossFunction{w}{Z}\leq \sigma^2$, $\forall w\in\mathcal{W}$ & $\sqrt{\sigma^2\ChiSquareDivergence{\nu}{\mu}}$\\
        \hline
        $\ell\in[a,b], \alpha\in[-1,2]$ & $(b-a)\sqrt{D_{\alpha}(\nu||\mu) / 2}$\\
        \hline
        $\ell\in[a,b]$ & $(b-a)\sqrt{H^2(\nu||\mu)}$\\
        \hline
        $\ell\in[a,b]$ &  $(b-a)\sqrt{\KLDivergence{\mu}{\nu} / 2}$\\
        \hline
        $\ell\in[a,b]$ &  $(b-a)\sqrt{\frac{D_{\mathrm{JS}(\theta)}(\nu||\mu)}{2\theta(1-\theta)}}$\\
        \hline
        $\ell\in[a,b]$ &  $(b-a)\sqrt{2D_{\mathrm{LC}}(\nu||\mu)}$\\
        \bottomrule
    \end{tabular}
\end{table}
\begin{remark}
    Corollary~\ref{corollary::pp generalization bounds} coincides with the previous result~\cite{agrawal2021optimal}, which studies the optimal bounds between $f$-divergences and IPMs. Specifically, authors in~\cite{agrawal2021optimal} proved $\GeneralizedCGF{f}{Q}{tg} - t\Expectation{Q}{g}\leq \psi(t)$ if and only if $\fDivergence{P}{Q}\geq\psi^*(\Expectation{P}{g} - \Expectation{Q}{g})$. In our context, $g$ is replaced with $\Bar{\ell}$ and thus $\Expectation{Q}{g}=0$. Thus, Corollary~\ref{corollary::pp generalization bounds} can be regarded as an application of the general result \cite{agrawal2021optimal} in the OOD setting.
    \label{remark::PP generalization bounds are examples of the general result}
\end{remark}

\subsection{Examples}
\label{subsection::examples}
\lwl{In this subsection, we examine the above results in the context of two simple ``learning'' problems. 
We demonstrate that our new $\chi^2$-bound surpasses the existing KL-bound in the first example, and the recovered TV-bound achieves the best in the second example.}
\subsubsection{\renjie{Estimating the Gaussian Mean}}
Consider the task of estimating the mean of Gaussian random variables. 
We assume the training sample comes from the distribution $\mathcal{N}(m,\sigma^2)$, and the testing distribution is $\mathcal{N}(m',(\sigma')^2)$. 
We define the loss function as $\LossFunction{w}{z} = (w-z)^2$, then the ERM algorithm yields the estimation $w = \frac{1}{n}\sum_{i=1}^n z_i$. 
Under the above settings, the loss function is sub-Gaussian with parameter $2((\sigma')^2+\sigma^2/n)$, and thus Corollary~\ref{corollary::expectational generalization bounds::KL sub gaussian} and Corollary~\ref{corollary::expectational generalization bounds::Chi Square} apply. 
The known KL-bounds and the newly derived $\chi^2$-bounds are compared in Fig.~\ref{figure::Gauss::in distribution} and Fig.~\ref{figure::Gauss::Out of distribution}, where we set $(m, \sigma^2) = (1,1)$. 
In Fig.~\ref{figure::Gauss::in distribution} the two bounds are compared under the in-distribution setting, \ie, $m'=m$ and $\sigma'=\sigma$. 
\renjie{A rigorous analysis in Appendix~\ref{subsection::appendix::Gaussian and Bernoulli means} shows that both $\chi^2$- and KL-bound decay at the rate $\mathcal{O}(1/\sqrt{n})$, while the true generalization gap decays at the rate $\mathcal{O}(1/n)$. This additional square root comes from the ${(\psi^*)}^{-1}$ term in Theorem~\ref{theorem::the general theorem}.}
Moreover, the KL-bound has the form of $c\sqrt{\log(1+\frac{1}{n})}$ while the $\chi^2$-bound has the form of $c\sqrt{1/n}$. 
Thus the KL-bound is tighter than the $\chi^2$-bound and they are asymptotically equivalent as $n\to\infty$.  
On the other hand, we compare the OOD case in Fig.~\ref{figure::Gauss::Out of distribution}, where we set $m'=1$ and $(\sigma')^2=2$. 
We observe that the $\chi^2$-bound is tighter than the KL-bound at the very beginning. 
In Fig.~\ref{figure::Gauss::d=8}, \renjie{we fix the ``covariance error'', \ie, $\|\mathbf{\Sigma}'-\mathbf{\Sigma}\|_{\mathrm{F}}=(\sigma')^2-\sigma^2=1$, and extend this example to $d$-dimension with $d=8$. We observe that the $\chi^2$-bound is tighter as $n$ is sufficiently large, and more samples are needed in the higher dimensional case for the $\chi^2$-bound to surpass the KL-bound}.
Mathematically, by comparing the $\chi^2$-bound~\eqref{equation::chi square generalization bounds} and the KL-bound~\eqref{equation::expectational generalization bounds::KL sub gaussian}, we conclude that the $\chi^2$-bound will be tighter than the KL-bound whenever $\ChiSquareDivergence{P_i}{Q}<2\KLDivergence{P_i}{Q}$ since the variance of a random variable is no more than its sub-Gaussian parameter. See Appendix~\ref{subsection::appendix::Gaussian and Bernoulli means} for more details.
\begin{figure*}[ht]
\begin{subfigure}{0.32\textwidth}
    \includegraphics[width=\textwidth]{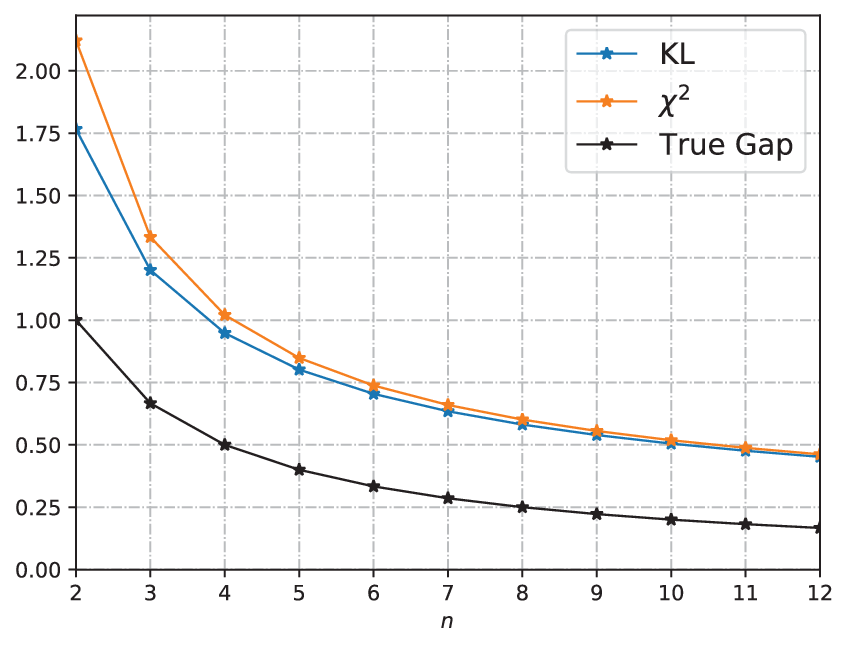}
    \caption{In-distribution, $m=1,\ \sigma^2=1,\ d=1$.}
    \label{figure::Gauss::in distribution}
\end{subfigure}
\hfill
\begin{subfigure}{0.32\textwidth}
    \includegraphics[width=\textwidth]{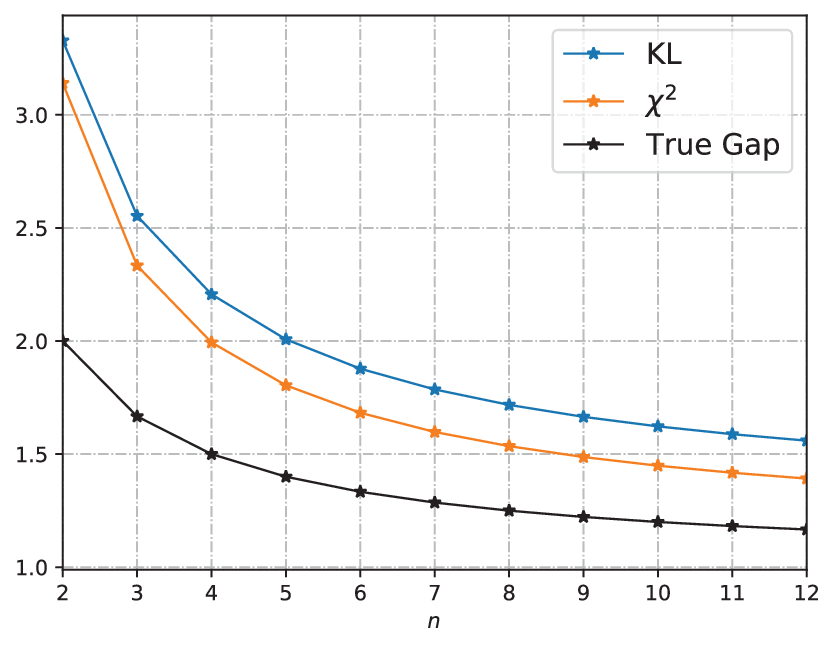}
    \caption{OOD, $m'=1, \ (\sigma')^2=2,\ d=1$.}
    \label{figure::Gauss::Out of distribution}
\end{subfigure}
\hfill
\begin{subfigure}{0.32\textwidth}
    \includegraphics[width=\textwidth]{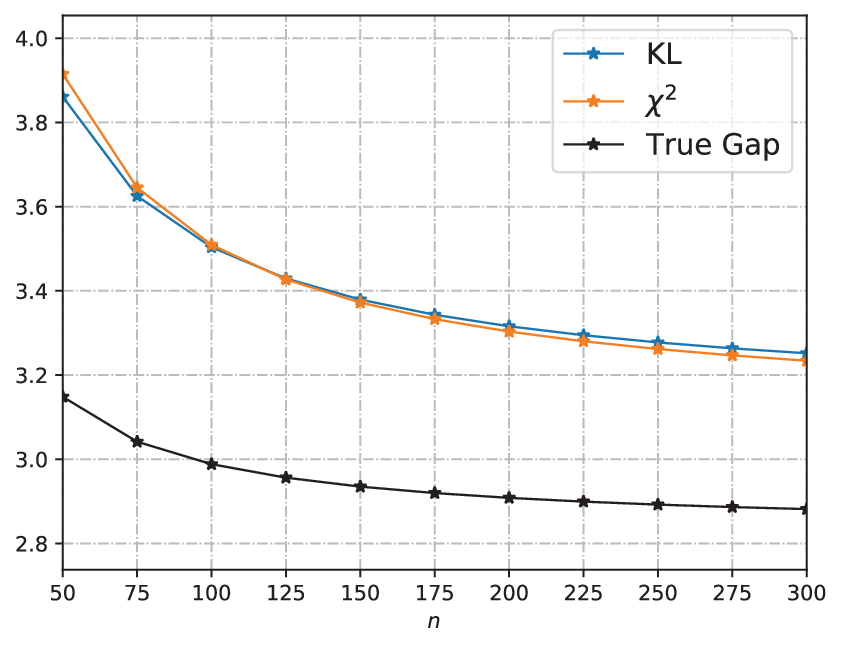}
    \caption{OOD, $\mathbf{m}'=\mathbf{m},\ \|\mathbf{\Sigma}'-\mathbf{\Sigma}\|_\mathrm{F}=1, \ d=8$.}
    \label{figure::Gauss::d=8}
\end{subfigure}
\caption{Generalization Bounds of Estimating Gaussian Means.}
\label{figure:Examples}
\end{figure*}

\subsubsection{\renjie{Estimating the Bernoulli Mean}} 
Consider the previous example where the Gaussian distribution is replaced with the Bernoulli distribution.
We assume the training samples are generated from the distribution $(\mathrm{Bern}(p))^{\otimes n}$ and the test data follows $\mathrm{Bern}(p')$. 
Again we define the loss function as $\LossFunction{w}{z} = (w-z)^2$ and choose the estimation $w = \frac{1}{n}\sum_{i=1}^n z_i$. 

Under the above settings, the loss function is bounded with $\ell\in[0,1]$. If we define the Hamming distance over $\mathcal{W}$ and $\mathcal{Z}$, then the total variation bound coincides with the Wasserstein distance bound. In Fig.~\ref{figure::Bernoulli::p0.3_ptest0.1}, we set $p=0.3$ and $p'=0.1$, and we see that the squared Hellinger, Jensen-Shannon, and Le Cam bounds are tighter than the KL-bound. On the contrary, $\chi^2$- and $\alpha$-divergence bounds are tighter than the KL-bound as in Fig.~\ref{figure::Bernoulli::p0.3_ptest0.5}, where we set $p=0.3$ and $p'=0.5$. But all these $f$-divergence-type generalization bounds are looser than the total variation bound, as illustrated by~\eqref{equation::total variation is tighter}. Additionally, there exists an approximately monotone relationship between $\chi^2$-divergence, $\alpha$-divergence ($\alpha = 3/2$), KL-divergence, and the squared Hellinger divergence. This is because all these bounds are $\alpha$-divergence type, with KL-divergence corresponds to $\alpha = 1$\footnote{Strictly speaking, $D_{\mathrm{KL}} = R_1$, the R\'enyi-$\alpha$-divergence with $\alpha=1$, and $R_\alpha$ is a $\log$-transformation of the $\alpha$-divergence.}. Moreover, we observe that the Le Cam divergence is always tighter than the Jensen-Shannon divergence. This is because the generator $f$ of Le Cam is smaller than that of Jensen-Shannon, and they share the same coefficient $\sigma_f = 1$.
Finally, we consider the extreme case in Fig.~\ref{figure::Bernoulli::n10_p0.6}, where $n=10$, $p=0.6$, and we allow $p'$ decays to $0$. 
When $p'$ is sufficiently small, the KL-bound (along with $\alpha$-divergence ($\alpha=3/2$) and $\chi^2$-bound) is larger than $1$ and thus becomes vacuous. 
\renjie{In contrast,} the squared Hellinger, Jensen-Shannon, Le Cam, and total variation bounds do not suffer from such a problem. 
See Appendix~\ref{subsection::appendix::Gaussian and Bernoulli means} for derivations.

\begin{figure*}[ht]
\centering
\begin{subfigure}{0.32\textwidth}
    \includegraphics[width=\textwidth]{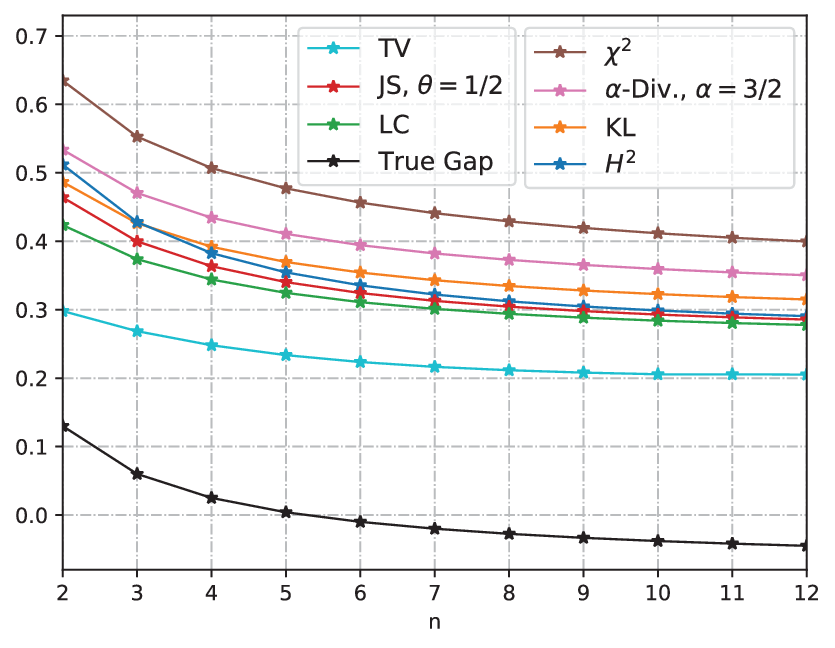}
    \caption{$p=0.3$, $p'=0.1$}
    \label{figure::Bernoulli::p0.3_ptest0.1}
\end{subfigure}
\hfill
\begin{subfigure}{0.32\textwidth}
    \includegraphics[width=\textwidth]{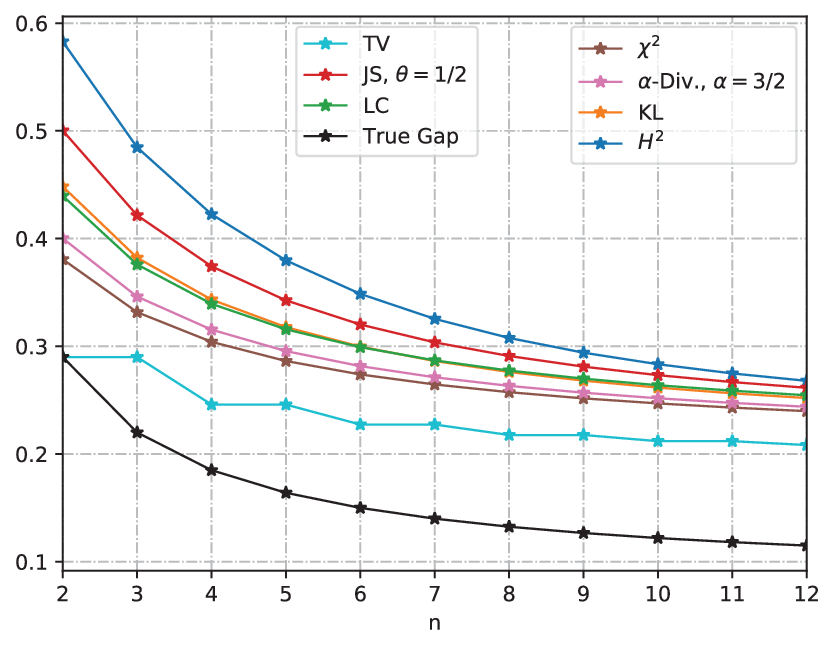}
    \caption{$p=0.3$, $p'=0.5$.}
    \label{figure::Bernoulli::p0.3_ptest0.5}
\end{subfigure}
\hfill
\begin{subfigure}{0.325\textwidth}
    \includegraphics[width=\textwidth]{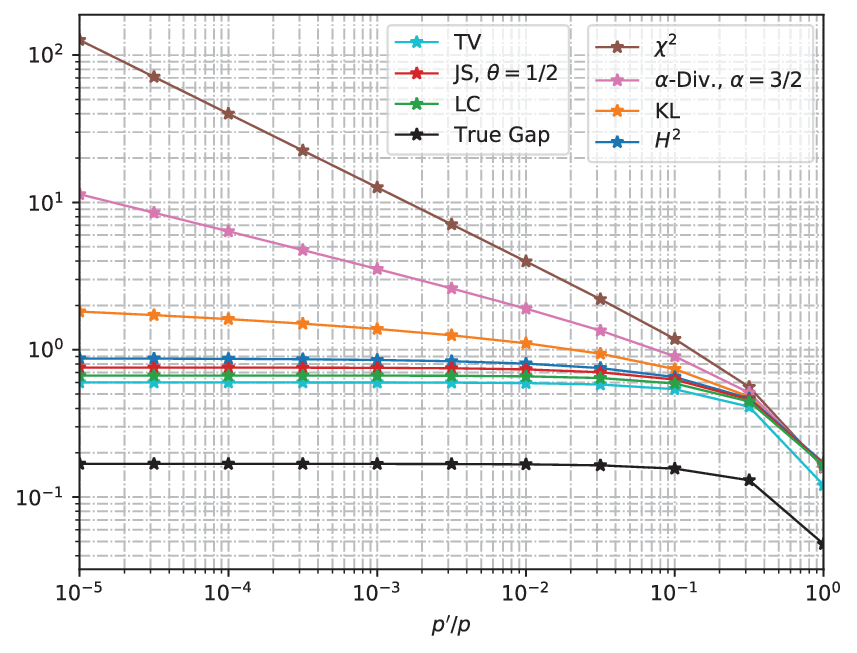}
    \caption{$n=10$, $p=0.6$.}
    \label{figure::Bernoulli::n10_p0.6}
\end{subfigure}
\caption{Generalization Bounds of Estimating Bernoulli Means.}
\label{figure:Examples::Bernoulli}
\end{figure*}

\section{The CMI Method and its Improvement}
\label{section::Improvements on the CMI Methods}

In this section, we study the OOD generalization for bounded loss function using the CMI methods. We first recall some CMI results in Subsection~\ref{subsection::An Overview of CMI Generalization Bounds} and then improve the state-of-the-art result in Subsection~\ref{subsection::f-ICIMI Generalization Bounds} using $f$-divergence. Unless otherwise stated, we assume that $\LossFunction{w}{z}\in[a,b],\ \forall w\in\mathcal{W}, z\in\mathcal{Z}$.

\subsection{\renjie{An Overview of CMI-based In-Distribution Generalization Bounds}}
\label{subsection::An Overview of CMI Generalization Bounds}
Instead of using the training samples $Z^n = (Z_1,\ldots,Z_n)$ generated from distribution $\nu^{\otimes n}$, the CMI method generates a set of super-samples $Z^{n, \pm} = (Z_1^\pm, \ldots,Z_n^\pm)$ from the distribution $\nu^{\otimes 2n}$, where $Z_i^\pm = (Z_i^+, Z_i^-)$ consists of a positive sample $Z_i^+$ and a negative sample $Z_i^-$. When performing training, the CMI framework randomly chooses $Z_i^+$ or $Z_i^-$ as the $i$-th training data $Z_i$ with the help of a Rademacher random variable $R_i$, \ie, $R_i$ takes $1$ or $-1$ with equal probability 1/2. Specifically, we choose $Z_i = Z_i^+$ if $R_i=1$ and choose $Z_i = Z_i^-$ otherwise. 

Based on the above construction, \cite{steinke2020reasoning} established the first in-distribution CMI generalization bound using the mutual information between the output hypothesis $W$ and the identifier $R^n$ conditioned on the super-samples $Z^{n,\pm}$, 
\begin{equation}
    \InDistributionGeneralizationGap \leq (b-a)\sqrt{\frac{2\CMI{W}{R^n}{Z^{n,\pm}}}{n}}.
    \label{equation::CMI bound}
\end{equation}
Here we use $\InDistributionGeneralizationGap$ to denote the expected in-distribution generalization gap by setting $\mu=\nu$. The CMI bound can be improved by combining the `individual' technique in \cite{bu2020tightening}. In particular, the Conditionally Individual Mutual Information (CIMI) bound~\cite{haghifam2020sharpened} is achieved by individualizing only on the data identifier $R_i$, 
\begin{equation}
    \InDistributionGeneralizationGap \leq \frac{(b-a)}{n}\sum_{i=1}^{n}\sqrt{2\CMI{W}{R_i}{Z^{n,\pm}}},
    \label{equation::CIMI bound}
\end{equation}
and the Individually Conditional Individual Mutual Information (ICIMI) bound~\cite{zhou2022individually} is achieved by individualizing on both $R_i$ and the super-sample $Z_i^\pm$,
\begin{equation}
    \InDistributionGeneralizationGap \leq \frac{(b-a)}{n}\sum_{i=1}^{n}\sqrt{2\CMI{W}{R_i}{Z_i^{\pm}}}.
    \label{equation::ICIMI bound}
\end{equation}

It is natural to ask which of these generalization bounds is tighter. If we only consider the mutual information term in these generalization bounds, it follows that $\mathrm{ICIMI} \leq \mathrm{CIMI} \leq \mathrm{CMI} \leq \mathrm{MI}$, and $\mathrm{ICIMI} \leq \mathrm{IMI} \leq \mathrm{MI}$ \cite{zhou2022individually}. This is mathematically written as 
\begin{enumerate}
    \item $\CMI{W}{R_i}{Z_i^\pm} \leq \CMI{W}{R_i}{Z^{n,\pm}} \leq \CMI{W}{R^n}{Z^{n,\pm}} \leq I(W;Z^n)$.
    \item $\CMI{W}{R_i}{Z_i^\pm} \leq I(W;Z_i) \leq I(W;Z^n)$.
\end{enumerate}
When it comes to the generalization bound, it was proved in~\cite{haghifam2020sharpened} that the CIMI bound is tighter than the CMI bound. Thus it follows that the ICIMI bound achieves the best among those CMI-based bounds. However, the ICIMI bound is not necessarily better than the IMI bound. Since a bounded random variable $X\in [a,b]$ is $\frac{b-a}{2}$-sub-Gaussian, the IMI bound reduces to 
\begin{equation}
    \InDistributionGeneralizationGap\leq \frac{b-a}{n}\sum_{i=1}^n\sqrt{\frac{I(W;Z_i)}{2}},
\end{equation}
for bounded loss function $\ell\in[a,b]$. As a consequence, the ICIMI bound is tighter than the IMI bound only if $\CMI{W}{R_i}{Z_i^\pm} < \frac{1}{4}I(W; Z_i)$, which does not necessarily hold in general. 

\subsection{f-ICIMI Generalization Bounds}
\label{subsection::f-ICIMI Generalization Bounds}

We note that all the above generalization bounds were established for the in-distribution case. When extending to the OOD cases, however, the idea of conditioning on super-samples does not work anymore, since the disagreement between training distribution $\nu$ and test distribution $\mu$ breaks the symmetry of the expected generalization gap. To tackle this problem, we decompose the OOD generalization gap as
\begin{align}
    \GeneralizationErrorAlgorithmic &= \Expectation{}{\Expectation{\mu}{\LossFunction{W}{Z}} - \Expectation{\nu}{\LossFunction{W}{Z}} + \Expectation{\nu}{\LossFunction{W}{Z}}- \frac{1}{n}\sum_{i=1}^{n}\LossFunction{W}{Z_i}} \\
    &= \PPGeneralizationErrorAlgorithmic + \InDistributionGeneralizationGap.
    \label{equation::decompose OOD generalization gap as PP gap + in-distribution gap}
\end{align}
\lwl{Here the outer expectation is taken \wrt the joint distribution of $W$ and $Z^n$.}
The first term of~\eqref{equation::decompose OOD generalization gap as PP gap + in-distribution gap} is the PP generalization gap and thus can be bounded using total variation or using $\sqrt{2\sigma_f^2\fDivergence{\nu}{\mu}}$ by Corollary~\ref{corollary::pp generalization bounds}. The second term of~\eqref{equation::decompose OOD generalization gap as PP gap + in-distribution gap} is the in-distribution generalization gap and thus can be bounded by a series of CMI-based methods as introduced above. In particular, by combining the $f$-divergence with the ICIMI method, we have the following improved generalization bound. 
The proof is postponed to the end of this subsection.


\begin{theorem}[$f$-ICIMI Generalization Bound]
    \label{theorem::f-ICIMI}
    Let the loss function $\ell(w,z)\in[a,b], \ \forall w\in\mathcal{W},\ z\in\mathcal{Z}$ and $f$ be an generator of some $f$-divergence satisfying the condition in Lemma~\ref{lemma::quadratic psi for bounded loss function}. Then, we have
    \begin{align}
         \GeneralizationErrorAlgorithmic
        &\leq  (b-a)\TotalVariation{\nu}{\mu} +  \frac{b-a}{n}\sum_{i=1}^n\Expectation{Z^{\pm}_i}{\sqrt{\frac{2\DisintegratedfCMI[f]{W}{R_i}{Z_i^{\pm}}}{f''(1)}}\ } 
        \label{equation::f-ICIMI bound::Expectation is outside the square root}\\
        &\leq  (b-a)\TotalVariation{\nu}{\mu} +  \frac{b-a}{n}\sum_{i=1}^n\sqrt{\frac{2\fCMI[f]{W}{R_i}{Z_i^{\pm}}}{f''(1)}}.
        \label{equation::f-ICIMI bound::Expectation is inside the square root}
    \end{align}
    In the above, $I_f$ denotes the $f$-mutual information and $I_f^z$ denotes the disintegrated $f$-mutual information, \ie, $\DisintegratedfCMI[f]{X}{Y}{z} = \fCMI{X}{Y}{Z=z}$.
\end{theorem}

For the in-distribution case, we recover the ICIMI bound by setting $f = f_{\mathrm{KL}}\coloneqq x\log x$, the generator of KL-divergence. 
Meanwhile, there may exist other $f$ leading to a tighter generalization bound~\eqref{equation::f-ICIMI bound::Expectation is outside the square root} or~\eqref{equation::f-ICIMI bound::Expectation is inside the square root}. 
This is equivalent to \renjie{finding $f$ that satisfies} the condition $\dfrac{\fCMI{W}{R_i}{Z_i^\pm}}{f''(1)}\leq \dfrac{\fCMI[\mathrm{KL}]{W}{R_i}{Z_i^\pm}}{f_{\mathrm{KL}}''(1)}$. 
Nonetheless, the existence of such $f$ is nontrivial, since there is a trade-off between $\fCMI{W}{R_i}{Z_i^\pm}$ and $1 / f''(1)$, see Fig. \ref{figure::graph of generator f}. 
On the one hand, the $f$-mutual information is essentially an expectation of the function $f$, hence a ``steep'' $f$ (like the generator of KL- or $\chi^2$-divergence) would yield a larger $\fCMI{W}{R_i}{Z_i^\pm}$ term. 
On the other hand, the $f''(1)$ reflects the curvature of the graph of $f$ at the point $(1, 0)$, hence a steep $f$ also result a larger $f''(1)$ term. 
Similarly, for those ``flat'' $f$ (like the generator of Jensen Shannon- or squared Hellinger divergence), both the $f$-mutual information and the curvature would become smaller. 
\renjie{This makes selecting $f$ to maximize the quotient $\fCMI{W}{R_i}{Z_i^\pm} / f''(1)$ a nontrivial task.}
\renjie{However, Theorem~\ref{theorem::f-ICIMI} provides more options beyond the KL divergence, allowing us to identify a better choice in certain scenarios, which we will discuss in the next section.}

\begin{figure*}[ht]
    \centering
    \includegraphics[width=0.45\textwidth]{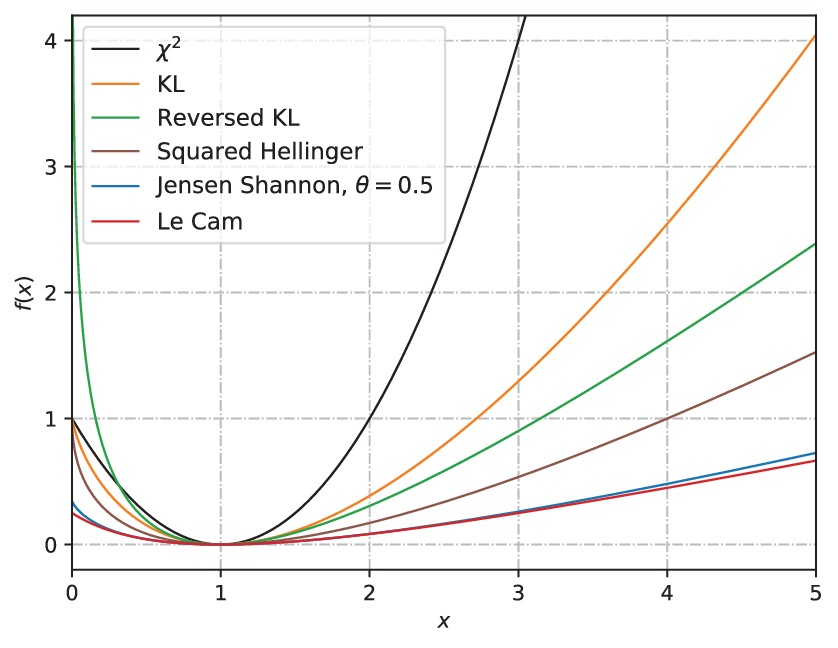}
    \caption{Graph of some common $f$: A tradeoff between $\fCMI{W}{R_i}{Z_i^\pm}$ and $1 / f''(1)$}
    \label{figure::graph of generator f}
\end{figure*}
\begin{proof}[Proof of the Theorem~\ref{theorem::f-ICIMI}]
    By the decomposition~\eqref{equation::decompose OOD generalization gap as PP gap + in-distribution gap}, it suffices to consider the expected in-distribution generalization gap, which can be rewritten as 
    \begin{align}
        \InDistributionGeneralizationGap 
        &= \frac{1}{n}\sum_{i=1}^n 
        \Expectation{Z_i^\pm}{\Expectation{W,R_i|Z_i^\pm}{R_i\left(\LossFunction{W}{Z_i^-} - \LossFunction{W}{Z_i^+}\right)}} \\
        &\leq \frac{1}{n}\sum_{i=1}^n\Expectation{Z_i^\pm}{
        \inf_{t>0}\frac{\fDivergence{\ProbabilityKernel{W,R_i}{Z_i^\pm}}{Q_{W,R_i|Z_i^\pm}}
        + \GeneralizedCGF{f}{Q_{W,R_i|Z_i^\pm}}{tR_i\left(\LossFunction{W}{Z_i^-} - \LossFunction{W}{Z_i^+}\right)}
        }{t}}.
    \end{align}
Choose $Q_{W,R_i|Z_i^\pm} = \ProbabilityKernel{W}{Z_i^\pm}\ProbabilityKernel{R_i}{Z_i^\pm}$ and we have $\fDivergence{\ProbabilityKernel{W,R_i}{Z_i^\pm}}{Q_{W,R_i|Z_i^\pm}} = \DisintegratedfCMI[f]{W}{R_i}{Z_i^\pm}$.
By assumption $\ell\in [a,b]$, we have $R_i\left(\LossFunction{W}{Z_i^-} - \LossFunction{W}{Z_i^+}\right) \in [-(b-a), b-a]$, and thus the Lemma~\ref{lemma::quadratic psi for bounded loss function} follows that
\begin{equation}
    \GeneralizedCGF{f}{Q_{W,R_i|Z_i^\pm}}{tR_i\left(\LossFunction{W}{Z_i^-} - \LossFunction{W}{Z_i^+}\right)} \leq \frac{1}{2f''(1)}(b-a)^2t^2.
\end{equation}
As a consequence, we have
\begin{align}
    \InDistributionGeneralizationGap 
    &\leq \frac{1}{n}\sum_{i=1}^{n}\Expectation{Z_i^\pm}{\inf_{t>0}\frac{1}{t}\left(\DisintegratedfCMI{W}{R_i}{Z_i^\pm} + \frac{(b-a)^2t^2}{2f''(1)}\right)} \\
    &= \frac{(b-a)}{n}\sum_{i=1}^{n}\Expectation{Z_i^\pm}{\sqrt{\frac{2\DisintegratedfCMI[f]{W}{R_i}{Z_i^\pm}}{f''(1)}}\ } \\
    &\leq \frac{(b-a)}{n}\sum_{i=1}^{n}\sqrt{\frac{2\fCMI[f]{W}{R_i}{Z_i^\pm}}{f''(1)}},
\end{align}
where the last inequality follows by Jensen's inequality.
\end{proof}

\section{\renjie{Applications to the SGLD Algorithm}}
\label{section::Applications on the SGLD Algorithm}

\lwl{In this section, we demonstrate an application of preceding results to the stochastic gradient Langevin dynamics (SGLD) algorithm \cite{welling2011bayesian}. 
\renjie{SGLD is a learning algorithm that combines the stochastic gradient descent (SGD), a Robbins–Monro optimization algorithm \cite{robbins1951stochastic}, and Langevin dynamics, a mathematical extension of molecular dynamics models.
Essentially, each iteration of SGLD is an SGD update with added isotropic Gaussian noise.
The noise enables SGLD to escape local minima and asymptotically converge to global minima for sufficiently regular non-convex loss functions.
}
The wide application of SGLD in machine learning has prompted a series of study on its generalization performance~\cite{bu2020tightening,zhou2022individually,negrea2019information,haghifam2020sharpened,rodriguez2021random}.}
The remaining part of this section is organized as follows. 
After a brief introduction to the SGLD in~\ref{subsection::SGLD Introduction}, we exhibit improved SGLD generalization bounds in the following three subsections. 
Motivated by the observation in~\ref{subsection::examples} that TV-based bound achieves the tightest one, we establish a refined SGLD generalization bound in~\ref{subsection::Generalization Bounds on the SGLD Algorithm} and show its advantages whenever the Lipschitz constant of the loss function is large. 
Next, we turn to apply the CMI methods to SGLD. In~\ref{subsection::SGLD using Without Replacement Sampling}, an $f$-ICIMI type generalization bound for SGLD is established on the basis of squared Hellinger divergence. 
Due to the nature of the chain rule of the squared Hellinger divergence, this generalization bound is superior under the without-replacement sampling scheme, and may become worse whenever replacement is allowed. 
The latter problem is remedied in~\ref{subsection::SGLD with big O assumption}, where we utilize a new technique, the sub-additivity of the $f$-divergence, to establish an asymptotically tighter generalization bound for SGLD.

\subsection{Stochastic Gradient Langevin Dynamics}
\label{subsection::SGLD Introduction}
The SGLD algorithm starts with an initialization $W_0$ and iteratively updates $W_t$ through a noisy and stochastic gradient descent process. Mathematically, we write 
\begin{equation}
    W_t = W_{t-1} - \eta_t\nabla_{W}\ell(W_{t-1},Z_{U_t}) + \sigma_t \xi_t,\ t\in[T],
    \label{equation::SGLD update rule}
\end{equation}
where $\xi_t\sim\mathcal{N}(\mathbf{0},\mathbf{I})$ is the standard Gaussian random vector and $\sigma_t$ is the scale parameter. In the standard SGLD setting, we have $\sigma_t = \sqrt{2\eta_t\beta^{-1}}$ where $\beta$ denotes the inverse temperature. Additionally, $U_t\in[n]$ specifies which data is employed for calculating the gradient within the $t$-th iteration (\ie, $U_t = i$ if $Z_i$ is taken), and finally the algorithm outputs $W_T$ after $T$ iterations. We say $U_{[T]} = (U_1,\ldots, U_T)$ is a sample path, which is randomly generated from some process like random sampling over $[n]$ with replacement. Let $\mathcal{T}_i$ be the set of \renjie{steps} $t$ that $Z_i$ is taken for updating $W_t$. Then, $\mathcal{T}_i$ is a deterministic function of the sample path $U_{[T]}$ and we write by $\mathcal{T}_i = \mathcal{T}_i(U_{[T]})$.

The generalization of the SGLD algorithm was also studied by~\cite{bu2020tightening,zhou2022individually}, which we will discuss later. 
Other generalization analyses of the non-stochastic Langevin dynamic and the mini-batch SGLD are discussed in~\cite{negrea2019information,haghifam2020sharpened,rodriguez2021random}, where they use a data-dependent estimation on the prior distribution and their results are incomparable to ours.

\subsection{Generalization Bounds on SGLD for Bounded and Lipschitz Loss Functions}
\label{subsection::Generalization Bounds on the SGLD Algorithm}

Authors in \cite{bu2020tightening} utilized IMI to proved that, if the loss function is $\sigma$-sub-Gaussian and $L$-Lipschitz, the in-distribution generalization gap of the SGLD algorithm can be bounded by
\begin{equation}
    \InDistributionGeneralizationGap \leq \frac{\sigma}{n}\sum_{i=1}^n\Expectation{U_{[T]}}{\sqrt{\sum_{t\in\mathcal{T}_i(U_{[T]})}\frac{\eta_t^2L^2}{\sigma_t^2}}}.
    \label{equation::Bu et al IMI SGLD generalization bound}
\end{equation}
When the loss function is bounded in $[a, b]$, one can simply replace the $\sigma$ in~\eqref{equation::Bu et al IMI SGLD generalization bound} with $(b-a) / 2$.
However, as we see in \ref{subsection::examples}, the total variation achieves tighter bound than the KL-based one. Thus, the above result may be further improved. To this end, we invoke Corollary~\ref{corollary::expectational generalization bounds::TV} and obtain the following result \renjie{under the OOD setting}. The proof is postponed to Appendix~\ref{subsection::appendix::proof of theorem::SGLD using Pinsker and B-H}.

\begin{theorem}
    Let $W$ be the output of the SGLD algorithm as defined in~\eqref{equation::SGLD update rule}. Suppose the loss function is $L$-Lipschitz and bounded in $[a, b]$. We have
    {\small
    \begin{equation}
        \GeneralizationErrorAlgorithmic \leq (b-a)\TotalVariation{\nu}{\mu} + \frac{b-a}{n}\sum_{i=1}^n\Expectation{U_{[T]}}{\min\left\{
        \frac{1}{2}\sqrt{\sum_{t\in\mathcal{T}_i(U_{[T]})} \frac{\eta_t^2L^2}{\sigma_t^2}}
        ,\sqrt{1 - \exp\left(-\sum_{t\in\mathcal{T}_i(U_{[T]})} \frac{\eta_t^2L^2}{2\sigma_t^2}\right)}\right\}}.
        \label{equation::SGLD generalization bound using Pinsker and B-H inequality}
    \end{equation}}
    \label{theorem::SGLD generalization bound using Pinsker and B-H inequality}
\end{theorem}

The generalization bound~\eqref{equation::SGLD generalization bound using Pinsker and B-H inequality} consists of two parts. 
The first part, $(b-a)\TotalVariation{\nu}{\mu}$, captures the \renjie{distribution shift} and can further be bounded using $f$-divergence between $\nu$ and $\mu$. 
The remaining part of~\eqref{equation::SGLD generalization bound using Pinsker and B-H inequality} captures the in-distribution generalization gap of the SGLD algorithm. 
In particular, we notice that the first term inside the \renjie{$\min\{\cdot, \cdot\}$} coincides with the previous result~\eqref{equation::Bu et al IMI SGLD generalization bound}. 
This is also a direct consequence of the Pinsker inequality, as shown in Appendix~\ref{subsection::appendix::proof of theorem::SGLD using Pinsker and B-H}. 
In contrast, the second term inside the \renjie{$\min\{\cdot, \cdot\}$} follows from the Bretagnolle-Huber inequality \cite{bretagnolle1979estimation}, an exponential KL upper bound on the total variation. 

We compare the two terms inside the $\min\{\cdot,\cdot\}$ in Fig.~\ref{figure::Pinsker vs Bretagnolle-Huber}, where the x-axis is set to $\sum_{t\in\mathcal{T}_i}\frac{\eta_t^2L^2}{\sigma_t^2}$.
Since the parameters $\eta_t$ and $\sigma_t$ depend only on the SGLD algorithm, we keep them fixed and only consider the effect of the Lipschitz constant $L$ of the loss function. From Fig.~\ref{figure::Pinsker vs Bretagnolle-Huber}, we conclude that, if $L$ is small enough such that $\sum_{t\in\mathcal{T}_i}\dfrac{\eta_t^2L^2}{\sigma_t^2}<\delta$, 
then the Pinsker-type term is tighter. 
Here $\delta$ is the unique non-zero solution of the equation $\frac{1}{2}\sqrt{x} = \sqrt{1-e^{-x/2}}$, and we have
$\delta = 2W_0\left(-\dfrac{2}{e^2}\right) + 4\approx 3.187 $. 
Here, with a little abuse of notation, $W_0$ denotes the principal branch of the Lambert $W$ function.
On the other hand, if $L$ is sufficiently large such that $\sum_{t\in\mathcal{T}_i}\dfrac{\eta_t^2L^2}{\sigma_t^2}>\delta$, the Pinsker-type term begins to diverge and the newly added term helps establish a tighter generalization bound. 
\begin{figure*}[ht]
    \centering
    \includegraphics[width=0.5\textwidth]{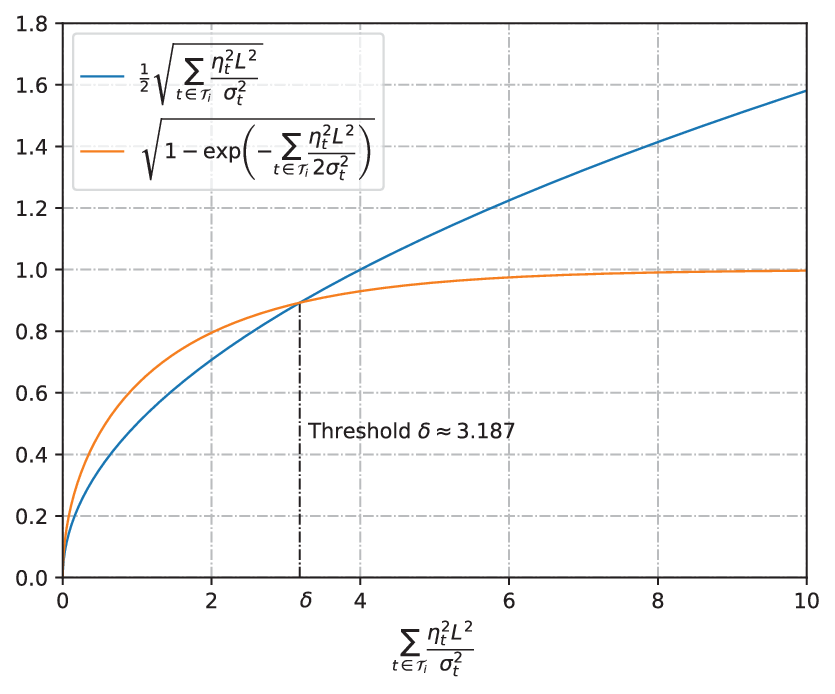}
    \caption{Comparison between the two terms in~\eqref{equation::SGLD generalization bound using Pinsker and B-H inequality}.}
    \label{figure::Pinsker vs Bretagnolle-Huber}
\end{figure*}

\subsection{SGLD using Without-Replacement Sampling Method}
\label{subsection::SGLD using Without Replacement Sampling}

In the subsequent two subsections, we turn to the CMI methods for the SGLD. 
The following theorem is a consequence of combining the $f$-ICIMI bound (Theorem~\ref{theorem::f-ICIMI}) with the squared Hellinger divergence. The proof is postponed to Appendix~\ref{subsection::appendix::proof of theorem::SGLD using chain rule}.

\begin{theorem}
    Let $W$ be the output of the SGLD algorithm as defined in~\eqref{equation::SGLD update rule}. Suppose the loss function is $L$-Lipschitz and bounded in $[a, b]$. We have
    \begin{equation}
        \GeneralizationErrorAlgorithmic \leq (b-a)\TotalVariation{\nu}{\mu} +  \frac{2(b-a)}{n}\sum_{t\in[T]}\sqrt{1 - \exp\left(-\frac{\eta_t^2L^2}{16\sigma_t^2}\right)}.
        \label{equation::f-ICIMI bounds on SGLD::chain rule}
    \end{equation}
    \label{theorem::f-ICIMI bounds on SGLD::chain rule}
\end{theorem}

To compare it with other in-distribution results, it suffices to consider the case $\nu=\mu$ so that the first TV term becomes 0.
By writing the summation $\sum_{t\in[T]}$ as $\sum_{i=1}^n\sum_{t\in\mathcal{T}_i}$, we see that the $\sum_{t\in\mathcal{T}_i}$ term is outside the square root as in~\eqref{equation::f-ICIMI bounds on SGLD::chain rule}, while the $\sum_{t\in\mathcal{T}_i}$ term is inside the square root as in~\eqref{equation::Bu et al IMI SGLD generalization bound} and~\eqref{equation::SGLD generalization bound using Pinsker and B-H inequality}. 
By the inequality $\sqrt{x+y}\leq\sqrt{x}+\sqrt{y}$, the generalization bound~\eqref{equation::f-ICIMI bounds on SGLD::chain rule} is larger than that of~\eqref{equation::Bu et al IMI SGLD generalization bound} and~\eqref{equation::SGLD generalization bound using Pinsker and B-H inequality} in general. However, Theorem~\ref{theorem::f-ICIMI bounds on SGLD::chain rule} 
shows its benefit if the SGLD is performed using the without-replacement sampling scheme, \ie, each data $Z_i$ is exactly applied once for updating $W$. 
Under this setting, $\mathcal{T}_i$ becomes a singleton and the summation term $\sum_{t\in\mathcal{T}_i}$ vanishes. 
Thus, under the without-replacement sampling scheme, the generalization bound in Theorem~\ref{theorem::f-ICIMI bounds on SGLD::chain rule} is strictly superior than the previous result~\eqref{equation::Bu et al IMI SGLD generalization bound}.
This is a direct consequence of the inequality $1-e^{-x}< x$ when $x>0$.

\subsection{SGLD with Asymptotic Assumption}
\label{subsection::SGLD with big O assumption}
As discussed in the previous subsection, the $\sum_{t\in\mathcal{T}_i}$ term of Theorem~\ref{theorem::f-ICIMI bounds on SGLD::chain rule} locates outside the square root, mostly causing a larger generalization bound if the replacement is allowed in sampling. 
Technically, this is a consequence of applying the chain rule of the squared Hellinger divergence in Theorem~\ref{theorem::f-ICIMI bounds on SGLD::chain rule}.  
Although the chain rule is a common technique applied in a series of KL-based SGLD generalization bounds, it does not perform well when other $f$-divergence is involved (at least for the squared Hellinger divergence).
In this subsection, we drop the chain rule and utilize the sub-additivity of the squared Hellinger divergence.
With the new technique at hand, we are able to move the $\sum_{t\in\mathcal{T}_i}$ term inside the square root and derive an asymptotically tighter generalization bound for \renjie{the} SGLD.

We begin with an intuitive observation. As the number of training data goes to infinity, the algorithm output is asymptotically independent of an individual sample.
Under the CMI model, this translates that the conditional probability of $R_i$ (identity of the $i$-th training sample), given the super-sample and some history outputs of the learning algorithm, should asymptotically converge to its marginal distribution $\frac{1}{2}\delta_1 + \frac{1}{2}\delta_{-1}$, \renjie{where $\delta_{1}$ is a Dirac delta distribution located at $1$}. The following assumption assumes there exists some function $g$ that captures the above convergence behavior. Recall that $n$ is the total number of training samples.

\begin{assumption}
    $\forall t\in \mathcal{T}_i,\ P_{R_i=1|W_{t-1},Z_i^\pm} = \dfrac{1}{2} + \mathcal{O}\left(g(n)\right)$, where $g\colon \mathbb{N}^+\to\RealNumber$ is some deterministic function satisfying $\lim_{n\to\infty}g(n) = 0$.
    \label{assumption::posterior probability of R_i is asymptotic 1/2}
\end{assumption}
\begin{remark}    
    One can imagine $g(n)$ as a polynomial $n^{-\gamma}$ for some $\gamma > 0$, but we do not add such restrictions. Additionally, one special case is the without-replacement sampling as discussed in the previous subsection. Since $R_i$ is exactly applied for updating $W$ at the $t$-th iteration, $W_{t-1}$ is independent of $R_i$, and thus $P_{R_i=1|W_{t-1},Z_i^\pm} = \frac{1}{2} = P_{R_i=-1|W_{t-1},Z_i^\pm}$ and $g(n)\equiv 0$. 
\end{remark}

\begin{theorem}
    Let $W$ be the output of the SGLD algorithm as defined in~\eqref{equation::SGLD update rule}. Suppose the loss function is $L$-Lipschitz and bounded in $[a, b]$. Under the Assumption~\ref{assumption::posterior probability of R_i is asymptotic 1/2}, we have
    \begin{equation}
         \GeneralizationErrorAlgorithmic \leq (b-a)\TotalVariation{\nu}{\mu} + \frac{2(b-a)}{n}\sum_{i=1}^n\Expectation{U_{[T]}}{
        \sqrt{\sum_{t\in\mathcal{T}_i(U_{[T]})}\left(1 - \exp\left(-\frac{\eta_t^2L^2}{16\sigma_t^2}\right) \right)}} + \mathcal{O}\left(g^2(n)\right).
        \label{equation::f-ICIMI bounds on SGLD::subadditivity}
    \end{equation}
    \label{theorem::f-ICIMI bounds on SGLD::subadditivity}
\end{theorem}
\begin{remark}
    Under the without-replacement sampling method, we have $g(n)\equiv 0$ and we exactly recover the generalization bound~\eqref{equation::f-ICIMI bounds on SGLD::chain rule}. Therefore, Theorem~\ref{theorem::f-ICIMI bounds on SGLD::chain rule} can be regarded as a special case of Theorem~\ref{theorem::f-ICIMI bounds on SGLD::subadditivity}.   
\end{remark}
Again, by the inequality $1-e^{-x}\leq x$, the generalization bound~\eqref{equation::f-ICIMI bounds on SGLD::subadditivity} is asymptotically tighter than the IMI bound~\eqref{equation::Bu et al IMI SGLD generalization bound} (with $\sigma=(b-a)/2$). Additionally, \cite{zhou2022individually} also derived an ICIMI-based bound for the SGLD algorithm. But it relies on a precise estimation of the posterior probability of $R_i$ (that is, $\alpha$ in our proof), which is computationally intractable. In the worst case, the ICIMI-bound can be twice larger than the IMI-bound. In contrast, the generalization bound in Theorem~\ref{theorem::f-ICIMI bounds on SGLD::subadditivity} is independent of the specific value of $\alpha$ but relies on its asymptotic behavior captured by some function $g(n)$, causing a $\mathcal{O}(g^2(n))$ error on the generalization bound. We conjecture that $g$ can be polynomial (\eg, $g(n) = 1 / n$) under certain sampling or training schemes, so that~\eqref{equation::f-ICIMI bounds on SGLD::subadditivity} can decay fast. We leave the study on function $g$ for future work.

The remaining part of this section focuses on proving the Theorem~\ref{theorem::f-ICIMI bounds on SGLD::subadditivity}. As mentioned before, we exploit the sub-additivity, rather than the chain rule, of the $f$-divergence. We start with its definition.

\begin{definition}[Sub-additivity of divergence \cite{ding2021gans}]
    Let $\mathcal{G}$ be a directed acyclic graph that defines a Bayes network. We say a divergence $D$ is sub-additive if, for all probability distributions $P$ and $Q$ over the Bayes network, we have
    \begin{equation}
        D(P||Q)\leq \sum_{v\in\mathcal{G}}D(P_{\{v\}\cup\Pi_v}||Q_{\{v\}\cup\Pi_v}),
    \end{equation}
    where $v$ denotes the vertex of $\mathcal{G}$ and $\Pi_v$ denotes the set of parents of $v$.
\end{definition}

Many $f$-divergences are sub-additive, but here we only need the sub-additivity of the squared Hellinger divergence.

\begin{lemma}[Theorem 2 of \cite{ding2021gans}]
    \label{lemma::some f divergence is sub-additive}
    The squared Hellinger divergence (\cf Table~\ref{table::comparison between f divergences}) is sub-additive on Bayes networks.
\end{lemma}

Combining the sub-additivity with SGLD yields the following useful lemma. The underlying reason why the sub-additivity can be applied to SGLD is that one can construct a Bayes network from the trajectory of SGLD. The proof is deferred to Appendix~\ref{subsection::appendix::proof of lemma::subadditivity of f-divergence}.

\begin{lemma}[Sub-additivity of Conditional $f$-Mutual Information]
    \label{lemma::subadditivity of f mutual information}
    Let $D_f$ be a sub-additive $f$-divergence, and $W_{[0:T]}\coloneqq(W_0, \ldots, W_T)$ be the trajectory generated by the SGLD algorithm under some fixed sample path $U_{[T]} = (u_1, \ldots, u_T)$. Again, let $R_i$ and $Z_i^\pm$ respectively be the data identifier and super-sample, as illustrated in the CMI model. We have
    \begin{equation}
        \fCMI{W_{[0:T]}}{R_i}{Z_i^\pm} \leq \sum_{t\in \mathcal{T}_i}\Expectation{W_{t-1},Z_i^\pm}{\fDivergence{\ProbabilityKernel{W_t, R_i}{W_{t-1},Z_i^\pm}}{\ProbabilityKernel{W_t}{W_{t-1},Z_i^\pm}P_{R_i}}},
        \label{equation::subadditivity of f mutual information}
    \end{equation}
    where $P_{R_i}$ denotes the uniform distribution over the set $\left\{-1, +1\right\}$.
\end{lemma}


Now we are ready to prove the main result, \renjie{\ie, Theorem \ref{theorem::f-ICIMI bounds on SGLD::subadditivity}}, of this subsection.
\begin{proof}[Proof of Theorem~\ref{theorem::f-ICIMI bounds on SGLD::subadditivity}]
    By Theorem~\ref{theorem::f-ICIMI}, we have
    \begin{align}
        \InDistributionGeneralizationGap
        &\leq \frac{b-a}{n}\sum_{i=1}^n\sqrt{\frac{2\fCMI[f]{W_T}{R_i}{Z_i^{\pm}}}{f''(1)}} \\
        & \leq \frac{b-a}{n}\sum_{i=1}^n\sqrt{\frac{2\fCMI[f]{W_{[0:T]}}{R_i}{Z_i^{\pm}}}{f''(1)}} \\
        &\leq \frac{b-a}{n}\sum_{i=1}^n\sqrt{\frac{2}{f''(1)}\sum_{t\in \mathcal{T}_i}\Expectation{W_{t-1},Z_i^\pm}{\fDivergence{\ProbabilityKernel{W_t, R_i}{W_{t-1},Z_i^\pm}}{\ProbabilityKernel{W_t}{W_{t-1},Z_i^\pm}P_{R_i}}}},
        \label{equation::upper bound of SGLD in terms of chain decomposition}
    \end{align}
    where the last inequality follows by Lemma~\ref{lemma::subadditivity of f mutual information}. 
    Let $\ProbabilityKernel{R_i}{W_{t-1},Z_i^\pm} = \alpha\delta_1 + (1-\alpha)\delta_{-1}$. By Assumption~\ref{assumption::posterior probability of R_i is asymptotic 1/2} we can write
    \begin{equation}
        \alpha(W_{t-1},Z_i^\pm) = \frac{1}{2}\left(1 + \epsilon(W_{t-1},Z_i^\pm)\right), \ \epsilon\in\mathcal{O}(g(n)).
        \label{equation::alpha = 1/2 + epsilon}
    \end{equation}
    It follows that $\ProbabilityKernel{W_t}{W_{t-1},Z_i^\pm,R_i}$ is Gaussian and $\ProbabilityKernel{W_t}{W_{t-1},Z_i^\pm}$ is a Gaussian mixture model (\cf~\eqref{equation::W_t given W_t-1 and Z_i and R_i is Gaussian} and~\eqref{equation::W_t given W_t-1 and Z_i is mixture Gaussian} in Appendix~\ref{subsection::appendix::proof of theorem::SGLD using chain rule}). We have
    \begin{align} 
        \ProbabilityKernel{W_t, R_i}{W_{t-1},Z_i^\pm} 
        &= \alpha \mathcal{N}^+\times \delta_1 + (1-\alpha)\mathcal{N}^{-} \times \delta_{-1}, 
        \label{equation::joint distribution of W_t and R_i conditioned on W_t-1 and Z_i}\\
        \ProbabilityKernel{W_t}{W_{t-1},Z_i^\pm}P_{R_i}
        &= \left(\alpha\mathcal{N}^+ + (1-\alpha)\mathcal{N}^{-}\right) \times \left(\frac{1}{2}\delta_1+\frac{1}{2}\delta_{-1}\right) \\
        &= \alpha\mathcal{N}^+\times \left(\frac{1}{2}\delta_1+\frac{1}{2}\delta_{-1}\right) + (1-\alpha)\mathcal{N}^{-} \times \left(\frac{1}{2}\delta_1+\frac{1}{2}\delta_{-1}\right).
        \label{equation::marginal distribution of W_t conditioned on W_t-1 and Z_i times P_R_i}
    \end{align}
    In what follows, we set $f(x)=\left(\sqrt{x}-1\right)^2$ and thus $D_f = H^2$, but we still use the $f$-notation for those expressions holding for arbitrary $f$-divergence. Plugging~\eqref{equation::joint distribution of W_t and R_i conditioned on W_t-1 and Z_i} and~\eqref{equation::marginal distribution of W_t conditioned on W_t-1 and Z_i times P_R_i} into~\eqref{equation::upper bound of SGLD in terms of chain decomposition} yields
    \begin{align}
        &\quad \ \Expectation{W_{t-1},Z_i^\pm}{\fDivergence{\ProbabilityKernel{W_t, R_i}{W_{t-1},Z_i^\pm}}{\ProbabilityKernel{W_t}{W_{t-1},Z_i^\pm}P_{R_i}}} \nonumber \\
        &\leq  \Expectation{W_{t-1},Z_i^\pm}{
        \alpha \fDivergence{\ProbabilityKernel{W_t, R_i}{W_{t-1},Z_i^\pm}}{\mathcal{N}^+\times \left(\frac{1}{2}\delta_1+\frac{1}{2}\delta_{-1}\right)} + (1-\alpha)\fDivergence{\ProbabilityKernel{W_t, R_i}{W_{t-1},Z_i^\pm}}{\mathcal{N}^-\times \left(\frac{1}{2}\delta_1+\frac{1}{2}\delta_{-1}\right)}}
        \label{equation::subadditivity::eq-1}\\
        &= \Expectation{W_{t-1},Z_i^\pm}{
        \frac{\alpha}{2}\left(f(2\alpha) + \Expectation{\mathcal{N}^+}{f\left(2(1-\alpha)\RadonNikodym{\mathcal{N}^-}{\mathcal{N}^+}\right)}\right) + 
        \frac{1-\alpha}{2}\left(f(2(1-\alpha)) + \Expectation{\mathcal{N}^-}{f\left(2\alpha\RadonNikodym{\mathcal{N}^+}{\mathcal{N}^-}\right)}\right)
        }\label{equation::subadditivity::eq-2}\\
        &= \Expectation{W_{t-1},Z_i^\pm}{2 - \sqrt{2}\left(
        \alpha^{3/2} + (1-\alpha)^{3/2} + \left(\alpha\sqrt{1-\alpha} + (1-\alpha)\sqrt{\alpha}\right)
        \int_{\mathcal{W}}\sqrt{\mathrm{d}\mathcal{N}^+\mathrm{d}\mathcal{N}^-}
        \right)}\label{equation::subadditivity::eq-3}\\
        &= \Expectation{W_{t-1},Z_i^\pm}{1 - \int_{\mathcal{W}}\sqrt{\mathrm{d}\mathcal{N}^+\mathrm{d}\mathcal{N}^-}} + \mathcal{O}\left(\epsilon^2\right)
        \label{equation::subadditivity::eq-4}\\
        &= \Expectation{W_{t-1},Z_i^\pm}{1 - \exp\left(-\frac{\eta_t^2\|\nabla_W\ell(W_{t-1},Z_i^+) - \nabla_W\ell(W_{t-1},Z_i^-)\|^2}{8\sigma_t^2}\right)} + \mathcal{O}\left(\epsilon^2\right)
        \label{equation::subadditivity::eq-5}\\
        &\leq 1 - \exp\left(-\frac{\eta_t^2L^2}{16\sigma_t^2}\right) + \mathcal{O}\left(\epsilon^2\right).
        \label{equation::upper bound of conditional f-mutual information with big O term}
    \end{align}
    In the above, inequality~\eqref{equation::subadditivity::eq-1} follows since the $f$-divergence is jointly (and thus separately) convex, and \eqref{equation::subadditivity::eq-2} follows by the definition of $f$-divergence. We derive \eqref{equation::subadditivity::eq-3} since $f(x)=(\sqrt{x}-1)^2$. To derive~\eqref{equation::subadditivity::eq-4}, we use 
    \begin{subequations}
        \begin{align}
            \alpha^{3/2} &= \left(\frac{1}{2}\right)^{3/2}\left(1 + \frac{3}{2}\epsilon\right) + \mathcal{O}\left(\epsilon^2\right), \\
            (1-\alpha)^{3/2} &= \left(\frac{1}{2}\right)^{3/2}\left(1 - \frac{3}{2}\epsilon\right) + \mathcal{O}\left(\epsilon^2\right), \\
            \alpha\sqrt{1-\alpha} &= \left(\frac{1}{2}\right)^{3/2}\left(1 + \frac{1}{2}\epsilon\right) + \mathcal{O}\left(\epsilon^2\right), \\
            (1-\alpha)\sqrt{\alpha} &= \left(\frac{1}{2}\right)^{3/2}\left(1 - \frac{1}{2}\epsilon\right) + \mathcal{O}\left(\epsilon^2\right),
        \end{align}
    \end{subequations}
    where these equations come from~\eqref{equation::alpha = 1/2 + epsilon} and the fact that $(1+\epsilon)^{1/2} = 1 + \epsilon/2 + \mathcal{O}(\epsilon^2)$.
    Expression~\eqref{equation::subadditivity::eq-5} follows by Lemma~\ref{lemma::Squared Hellinger Divergence between two Gaussian Distribution} in Appendix \ref{subsection::appendix::proof of theorem::SGLD using chain rule} since $1 - \int_{\mathcal{W}}\sqrt{\mathrm{d}\mathcal{N}^+\mathrm{d}\mathcal{N}^-} = \frac{1}{2}\SquaredHellingerDivergence{\mathcal{N}^+}{\mathcal{N}^-}$, and the last inequality~\eqref{equation::upper bound of conditional f-mutual information with big O term} follows by~\eqref{equation::upper bound of f mutual information between W_t and R_i conditioned on W_t-1 and Z_i::Chain rule of H^2} \renjie{in Appendix \ref{subsection::appendix::proof of theorem::SGLD using chain rule}}. 
    
    Finally, inserting~\eqref{equation::upper bound of conditional f-mutual information with big O term} into~\eqref{equation::upper bound of SGLD in terms of chain decomposition} and taking expectation \wrt the sample path $U_{[T]}$, the desired result~\eqref{equation::f-ICIMI bounds on SGLD::subadditivity} follows since $\epsilon^2\in\mathcal{O}\left(g^2(n)\right)$.
\end{proof}

\section{Conclusions}
\label{section::Conclusions}
This paper proposes an information-theoretic framework for \renjie{analyzing} OOD generalization, which seamlessly interpolates between IPM and $f$-divergence, reproduces known results as well as yields new generalization bounds. 
Next, the framework is integrated with the CMI methods, resulting in a class of $f$-ICIMI bounds, among which includes the state-of-the-art ICIMI bound. 
These results separately give a natural extension and improvement to the IMI methods as well as the CMI methods.
Lastly, we demonstrate their applications by proving generalization bounds for the SGLD algorithm, under the premise that the loss function is both bounded and Lipschitz. 
Although these results are derived under the OOD case, they also demonstrate the benefits for the in-distribution case.
In particular, the study gives special attention to two specific scenarios: SGLD employing without-replacement sampling and SGLD with asymptotic posterior probability. 
Compared with the traditional chain-rule based methods, our approaches also shed light on the alternative way to study the generalization of SGLD algorithms through the subadditivity of $f$-divergence.

\appendices
\section{Proof of Section~\ref{section::A general theorem}}
\subsection{Proof of Proposition~\ref{proposition::fundamental inequality}}
\label{subsection::appendix::proof of the fundamental inequality}

We provide an alternative proof of Proposition~\ref{proposition::fundamental inequality}, demonstrating its relationship with $(f,\Gamma)$-divergence \cite{birrell2022f}. We start with its definition. 
\begin{definition}[$(f, \Gamma)$-Divergence \cite{birrell2022f}] 
    Let $\mathcal{X}$ be a probability space. Suppose $P, Q\in\mathcal{P}(\mathcal{X})$ and $\Gamma\subseteq\mathcal{M}(\mathcal{X})$, $f$ be the convex function that induces the $f$-divergence. The $(f,\Gamma)$-divergence between distribution $P$ and $Q$ is defined by
    \begin{equation}
        \fGammaDivergence{P}{Q} \coloneq \sup_{g\in \Gamma}\bigl\{\Expectation{P}{g} - \GeneralizedCGF{f}{Q}{g}\bigr\}.
    \end{equation}
    \label{definition::f-Gamma-divergence}
\end{definition}
The $(f,\Gamma)$-divergence admits an upper bound, which interpolates between $\Gamma$-IPM and $f$-divergence. 
\begin{lemma}(\!\cite[Theorem 8]{birrell2022f})
    \begin{equation}
        \fGammaDivergence{P}{Q} \leq \inf_{\eta\in\mathcal{P}(\mathcal{X})} 
        \left\{\GammaIPM{P}{\eta} + \fDivergence{\eta}{Q}\right\}.
        \label{equation::upper bounds of f-Gamma-divergence}
    \end{equation}
    \label{lemma::upper bounds of f-Gamma-divergence}
\end{lemma}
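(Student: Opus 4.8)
The plan is to prove the bound by fixing an arbitrary intermediate distribution $\eta\in\mathcal{P}(\mathcal{X})$ and showing that the right-hand summand $\GammaIPM{P}{\eta}+\fDivergence{\eta}{Q}$ dominates $\fGammaDivergence{P}{Q}$ for every such $\eta$; taking the infimum over $\eta$ then yields the claim. The whole argument rests on a simple add-and-subtract trick together with the variational representation of the $f$-divergence (Lemma~\ref{lemma::variational representation of f divergence}).

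Concretely, I would first fix $\eta\in\mathcal{P}(\mathcal{X})$ and an arbitrary test function $g\in\Gamma$, and split the defining integrand of the $(f,\Gamma)$-divergence as
\begin{equation}
    \Expectation{P}{g} - \GeneralizedCGF{f}{Q}{g} = \bigl(\Expectation{P}{g} - \Expectation{\eta}{g}\bigr) + \bigl(\Expectation{\eta}{g} - \GeneralizedCGF{f}{Q}{g}\bigr).
\end{equation}
The first parenthesis is bounded above by $\GammaIPM{P}{\eta}$, since it is one particular instance of the supremum over $\Gamma$ defining the $\Gamma$-IPM (Definition~\ref{Definition::Gamma-IPM}). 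The second parenthesis is bounded above by $\fDivergence{\eta}{Q}$ by the variational representation in Lemma~\ref{lemma::variational representation of f divergence}, where the supremum may be taken over all measurable functions; since $\Gamma\subseteq\mathcal{M}(\mathcal{X})$, our fixed $g$ is an admissible competitor there. Combining the two bounds gives, for every $g\in\Gamma$,
\begin{equation}
    \Expectation{P}{g} - \GeneralizedCGF{f}{Q}{g} \leq \GammaIPM{P}{\eta} + \fDivergence{\eta}{Q}.
\end{equation}

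Next I would take the supremum over $g\in\Gamma$ on the left-hand side, which is exactly $\fGammaDivergence{P}{Q}$ by Definition~\ref{definition::f-Gamma-divergence}, while the right-hand side does not depend on $g$; this yields $\fGammaDivergence{P}{Q}\leq \GammaIPM{P}{\eta}+\fDivergence{\eta}{Q}$. Finally, since $\eta$ was arbitrary, taking the infimum over $\eta\in\mathcal{P}(\mathcal{X})$ on the right completes the proof. There is no serious obstacle here: the only point requiring a moment of care is confirming that the chosen $g\in\Gamma$ is a legitimate test function for the $f$-divergence variational formula, which is guaranteed because that formula holds with the supremum over the full class $\mathcal{M}(\mathcal{X})$ and $\Gamma$ is a subset of it. The result is essentially a decoupling of the single supremum over $\Gamma$ into two separate suprema via the triangle-like insertion of $\eta$.
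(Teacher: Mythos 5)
Your proof is correct. Note that the paper does not prove this lemma itself --- it imports it verbatim as Theorem 8 of the cited reference --- so there is no in-paper argument to compare against; your add-and-subtract decomposition of $\Expectation{P}{g}-\GeneralizedCGF{f}{Q}{g}$ through the intermediate distribution $\eta$, bounding the two pieces by the $\Gamma$-IPM and by the variational representation of $D_f$ (valid since $\Gamma\subseteq\mathcal{M}(\mathcal{X})$), is exactly the standard infimal-convolution argument behind that theorem, and the order of quantifiers (fix $\eta$ and $g$, then $\sup_{g\in\Gamma}$, then $\inf_{\eta}$) is handled correctly.
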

Now we are ready to prove Proposition~\ref{proposition::fundamental inequality}.
\begin{proof}[Proof of Proposition~\ref{proposition::fundamental inequality} using $(f,\Gamma)$-Divergence]
    \begin{align}
        \GeneralizationErrorAlgorithmic 
        &= \frac{1}{n} \sum_{i=1}^n \Expectation{P_i}{\RecenteredLossFunction{W}{Z_i}} 
        \label{eq:alternative proof:prop1-eq1}\\
        &= \frac{1}{n} \sum_{i=1}^n \frac{1}{t_i}\Expectation{P_i}{t_i\RecenteredLossFunction{W}{Z_i}} \\
        &\leq \frac{1}{n} \sum_{i=1}^n \frac{1}{t_i}\left(D_{f}^{t_i\Bar{\Gamma}}\left(P_i || Q \right) + \GeneralizedCGF{f}{Q}{t_i\RecenteredLossFunction{W}{Z_i}}\right) 
        \label{eq:alternative proof:prop1-eq2} \\
        & \leq \frac{1}{n} \sum_{i=1}^n \frac{1}{t_i}\inf_{\eta_i\in\mathcal{P}\left(\mathcal{W}\times\mathcal{Z}\right)}
        \Bigl\{W^{t_i\Bar{\Gamma}}\left(P_i,\eta_i\right)+ \fDivergence{\eta_i}{Q} + \GeneralizedCGF{f}{Q}{t_i\RecenteredLossFunction{W}{Z_i}}\Bigr\} 
        \label{eq:alternative proof:prop1-eq3} \\
        &= \text{RHS of}~\eqref{equation::fundamental inequality}. \nonumber
    \end{align}
    Here equality~\eqref{eq:alternative proof:prop1-eq1} follows by~\eqref{eq:prop1-eq4}, inequality~\eqref{eq:alternative proof:prop1-eq2} follows by Definition~\ref{definition::f-Gamma-divergence} and the condition $t_i\Bar{\ell}\in t_i\Bar{\Gamma}$, inequality~\eqref{eq:alternative proof:prop1-eq3} follows by Lemma~\ref{lemma::upper bounds of f-Gamma-divergence}, and the last equality follows by the fact that $\dfrac{1}{t}W^{t\Bar{\Gamma}}\left(P_i,\eta_i\right) = \GammaBarIPM{P_i}{\eta_i}$, for all $t\in\RealNumberNonnegative$.
\end{proof}

\subsection{Tightness of the Proposition~\ref{proposition::fundamental inequality}}
\label{subsection::appendix::tightness of the fundamental inequality}
The following proposition says that the equality in Proposition~\ref{proposition::fundamental inequality} can be achieved under certain conditions. 
\begin{proposition} The upper bound in Proposition~\ref{proposition::fundamental inequality} achieves equality if the following two conditions hold simultaneously.
    \begin{enumerate}
        \item $\Bar{\Gamma}$ is a singleton, \ie, $\Bar{\ell}$ is the only element of $\Bar{\Gamma}$.
        \label{enumeration::assumption-1}
        \item For each $i=1,\ldots,n$, the distribution $\eta_i$ and the parameter $t_i$ are related through
        \label{enumeration::assumption-2}
        \begin{equation}
            \RadonNikodym{\eta_i}{Q} = (f^*)'\left(t_i\RecenteredLossFunction{w}{z} - \lambda_i\right),
            \label{equation::the optimal eta}
        \end{equation}
        where $\lambda_i \in \RealNumber$ makes~\eqref{equation::the optimal eta} a probability density:
        \begin{equation}
            \Expectation{Q}{(f^*)'\left(t_i\RecenteredLossFunction{W}{Z} - \lambda_i\right)} = 1.
        \end{equation}
    \end{enumerate}
    \label{proposition::tightness of the fundamental inequality}
\end{proposition}
\begin{remark}
    Under the case of KL-divergence (see Remark~\ref{remark::cumulant generating function}), we have $(f^*)'(x) = e^x$ and thus $\lambda_i = \log\Expectation{Q}{e^{t_i\Bar{\ell}(W,Z)}}$. Therefore, the optimal $\eta_i$ has the form of
    \begin{equation}
        \RadonNikodym{\eta_i}{Q}(w,z) = \frac{e^{t_i\Bar{\ell}(w,z)}}{\Expectation{Q}{e^{t_i\Bar{\ell}(W,Z)}}} 
        = \frac{e^{-t_i\ell(w,z)}}{\Expectation{Q}{e^{-t_i\ell(W,Z)}}}.
    \end{equation}
    This means that the optimal $\eta_i$ is achieved exactly at the Gibbs posterior distribution, with $t_i$ acting as the inverse temperature. 
\end{remark}
\begin{proof}[Proof of Proposition~\ref{proposition::tightness of the fundamental inequality}]
    By assumption~\ref{enumeration::assumption-1}, we have $\GammaBarIPM{P_i}{\eta_i} = \Expectation{P_i}{\Bar{\ell}} - \Expectation{\eta_i}{\Bar{\ell}}$, and thus Proposition~\ref{proposition::fundamental inequality} becomes
    \begin{equation}
        \GeneralizationErrorAlgorithmic\leq \frac{1}{n}\sum_{i=1}^{n}\Bigl(\Expectation{P_i}{\Bar{\ell}} - \Expectation{\eta_i}{\Bar{\ell}}  \\
        + \frac{1}{t_i}\fDivergence{\eta_i}{Q} + \frac{1}{t_i}\GeneralizedCGF{f}{Q}{t_i\RecenteredLossFunction{W}{Z}}\Bigr).
    \end{equation}
    As a consequence, it suffices to prove
    \begin{equation}
        \Expectation{\eta}{g} = \frac{1}{t}\fDivergence{\eta}{Q} + \frac{1}{t}\GeneralizedCGF{f}{Q}{tg},
        \label{equation::appendix::equivalent equality}
    \end{equation}
    under the conditions that
    \begin{subequations}
    \begin{align}
        & \RadonNikodym{\eta}{Q} = (f^*)'\left(t(g-\lambda)\right),
        \label{enumeration::equivalent condition-1}\\
        & \Expectation{Q}{(f^*)'\left(t(g-\lambda)\right)} = 1,
        \label{enumeration::equivalent condition-2}
    \end{align}
    \end{subequations}
    where $\eta, Q\in\mathcal{P}(\mathcal{X})$, $g\in\mathcal{M}(\mathcal{X})$, and $t\in\RealNumberNonnegative$.
    If it is the case, then Proposition~\ref{proposition::tightness of the fundamental inequality} follows by setting $\mathcal{X} = \mathcal{W}\times\mathcal{Z}$, $\eta=\eta_i$, $t=t_i$, $g=\Bar{\ell}$, and $\lambda=\frac{1}{t_i}\lambda_i$. To see~\eqref{equation::appendix::equivalent equality} holds, we need the following lemma.
    \begin{lemma}(\cite[Lemma 48]{birrell2022f})
        \begin{equation}
            f\bigl((f^*)'(y)\bigr) = y(f^*)'(y) - f^*(y).
        \end{equation}
        \label{lemma::lemma for prove the tightness}
    \end{lemma}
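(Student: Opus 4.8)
The plan is to recognize the stated identity as the equality case of the Fenchel--Young inequality. Recall that for a convex $f$ with conjugate $f^*(y) = \sup_{x}\{xy - f(x)\}$, the Fenchel--Young inequality reads $f(x) + f^*(y) \geq xy$ for all $x,y$, with equality precisely when $x$ attains the supremum defining $f^*(y)$, equivalently when $x \in \partial f^*(y)$. The whole identity will drop out of this equality once we identify the maximizer with $(f^*)'(y)$.

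First I would fix $y$ and analyze the optimization $\sup_{x}\{xy - f(x)\}$. Since $f$ is convex, the supremum is attained at some $x^*$ characterized by the first-order condition $y \in \partial f(x^*)$, and at this optimizer the conjugate evaluates to $f^*(y) = x^* y - f(x^*)$. The next step is to argue that the derivative of the conjugate recovers this optimal point, that is, $(f^*)'(y) = x^*$. This follows from the envelope theorem, or more directly from the convex-analytic duality $x \in \partial f^*(y) \iff y \in \partial f(x)$ together with the fact that, when $f^*$ is differentiable at $y$, its subdifferential is the singleton $\{(f^*)'(y)\}$.

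Finally I would substitute $x^* = (f^*)'(y)$ into the equality $f^*(y) = x^* y - f(x^*)$ and rearrange. This gives $f\bigl((f^*)'(y)\bigr) = x^* y - f^*(y) = y\,(f^*)'(y) - f^*(y)$, which is exactly the claimed identity.

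The main obstacle is the technical justification that $(f^*)'(y)$ coincides with the maximizer $x^*$, rather than the algebra, which is immediate. This rests on standard facts from convex analysis: biconjugacy $f^{**} = f$ for convex lower semicontinuous $f$, attainment of the supremum, and the subdifferential inversion $x \in \partial f^*(y) \iff y \in \partial f(x)$. When $f^*$ fails to be differentiable at $y$, the identity should be read with $(f^*)'(y)$ denoting any selection from $\partial f^*(y)$. Since the $f$ inducing an $f$-divergence is convex on $(0,+\infty)$, and the derivative $(f^*)'$ already appears in the tightness conditions of Proposition~\ref{proposition::tightness of the fundamental inequality} via~\eqref{equation::the optimal eta}, the requisite regularity is implicit in the present setting, so invoking these properties is harmless.
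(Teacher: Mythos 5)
Your proof is correct: the identity is exactly the equality case of the Fenchel--Young inequality evaluated at $x^* = (f^*)'(y)$, and your justification via the subdifferential inversion $x \in \partial f^*(y) \iff y \in \partial f(x)$ (using biconjugacy, with differentiability of $f^*$ at $y$ forcing attainment) is the standard and sound way to close the only nontrivial step. The paper offers no proof of its own---it cites \cite[Lemma 48]{birrell2022f}---and your argument is essentially the one given in that source, so nothing further is needed.
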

    Then the subsequent argument is very similar to that of \cite[Theorem 82]{birrell2022f}. We have
    \begin{align}
        &\quad \  \sup_{P\in\mathcal{P}(\mathcal{X})}\Bigl\{\Expectation{P}{g} - \frac{1}{t}\fDivergence{P}{Q}\Bigr\} \\
        & \geq \lambda + \Expectation{\eta}{g-\lambda} - \frac{1}{t}\fDivergence{\eta}{Q} \\
        & = \lambda + \Expectation{Q}{(f^*)'(t(g-\lambda))(g-\lambda)} - \frac{1}{t}\fDivergence{\eta}{Q} 
        \label{equation::appendix::proof of the tightness::eq-1}\\
        & = \frac{1}{t}\bigl( t\lambda + \Expectation{Q}{f^*(t(g-\lambda))}\bigr) 
        \label{equation::appendix::proof of the tightness::eq-2}\\
        & \geq \frac{1}{t}\GeneralizedCGF{f}{Q}{tg} 
        \label{equation::appendix::proof of the tightness::eq-3}\\
        & = \sup_{P\in\mathcal{P}(\mathcal{X})}\Bigl\{\Expectation{P}{g} - \frac{1}{t}\fDivergence{P}{Q}\Bigr\}.
        \label{equation::appendix::proof of the tightness::eq-4}
    \end{align}
    In the above, equality~\eqref{equation::appendix::proof of the tightness::eq-1} follows by~\eqref{enumeration::equivalent condition-1}, equality~\eqref{equation::appendix::proof of the tightness::eq-2} follows by Lemma~\ref{lemma::lemma for prove the tightness}, inequality~\eqref{equation::appendix::proof of the tightness::eq-3} follows by Definition~\ref{definition::generalized CGF}, and equality~\eqref{equation::appendix::proof of the tightness::eq-4} follows by Lemma~\ref{lemma::variational representation of f divergence} and equality~\eqref{equation::appendix::legendre dual of tF}.
    Therefore, all the inequalities above achieve equality. This proves~\eqref{equation::appendix::equivalent equality}.
\end{proof}
\section{Proofs in Section~\ref{section::corollaries}}
\label{section::proofs of the special cases}

\subsection{Proof of Corollary~\ref{corollary::expectational generalization bounds::Wasserstein}}
\label{subsection::appendix::proof of Wasserstein bounds}
\begin{proof}
    By Corollary~\ref{corollary::the general theorem::Gamma}, we have
    \begin{align}
        \GeneralizationErrorAlgorithmic 
        & \leq \frac{1}{n}\sum_{i=1}^n \sup_{g\in\Gamma} \bigl\{\Expectation{P_i}{g} - \Expectation{Q}{g}\bigr\} \\
        & = \frac{1}{n}\sum_{i=1}^n \sup_{g\in\Gamma}\bigl\{\Expectation{P_i}{g}
            - \Expectation{P_W\otimes \nu}{g} + \Expectation{P_W\otimes \nu}{g} 
            - \Expectation{Q}{g}\bigr\}\\
        & \leq \frac{1}{n}\sum_{i=1}^n \sup_{g\in\Gamma}\Bigl\{
        \Expectation{\nu}{\Expectation{\ProbabilityKernel{W}{Z_i}}{g} - \Expectation{P_W}{g}}+ \Expectation{P_W}{\Expectation{\nu}{g}  - \Expectation{\mu}{g}}\Bigr\}
        \label{equation::appendix::wasserstein::eq-1}\\
        & \leq \frac{1}{n}\sum_{i=1}^n \Expectation{\nu}{L_W W_1(\ProbabilityKernel{W}{Z_i}, P_W)} + L_Z W_1(\nu,\mu).
        \label{equation::appendix::wasserstein::eq-2}
    \end{align}
In the above, inequality~\eqref{equation::appendix::wasserstein::eq-1} follows by the tower property of conditional expectation, and inequality~\eqref{equation::appendix::wasserstein::eq-2} follows by the Kantorovich-Rubinstein duality~\eqref{equation::Kantorovich-Rubinstein Duality}. 
\end{proof}

\subsection{Proof of Corollary~\ref{corollary::expectational generalization bounds::TV}}
\label{subsection::appendix::proof of TV bounds}
\begin{proof}
    By assumption we have $\ell\in\Gamma$ and thus 
    \begin{align}
        \GeneralizationErrorAlgorithmic 
        &\leq \frac{1}{n}\sum_{i=1}^n \GammaIPM{P_i}{Q} \label{equation::appendix::TV::eq-1} \\
        & = \frac{1}{n}\sum_{i=1}^n W^{\Gamma-(b-a)/2}(P_i,Q) \label{equation::appendix::TV::eq-2} \\
        & = \frac{b-a}{n}\sum_{i=1}^n \TotalVariation{P_i}{Q} \label{equation::appendix::TV::eq-3}.
    \end{align}
    In the above, inequality~\eqref{equation::appendix::TV::eq-1} follows by Corollary~\ref{corollary::the general theorem::Gamma}, equality~\eqref{equation::appendix::TV::eq-2} follows by the translation invariance of IPM, and equality~\eqref{equation::appendix::TV::eq-3} follows by the variational representation of total variation:
    \begin{equation}
        \TotalVariation{P}{Q} = \sup_{\|g\|_{\infty}\leq \frac{1}{2}}\bigl\{\Expectation{P}{g} - \Expectation{Q}{g}\bigr\}.
    \end{equation}
    Thus we proved~\eqref{equation::expected generalization bounds::TV::eq-1}. Then~\eqref{equation::expectational generalization bounds::TV::eq-2} follows by the chain rule of total variation. The general form of the chain rule of total variation is given by
    \begin{equation}
        \TotalVariation{P_{X^m}}{Q_{X^m}} \leq \sum_{i=1}^m \Expectation{P_{X^{i-1}}}{\TotalVariation{P_{X_i|X^{i-1}}}{Q_{X_i|X^{i-1}}}}.
    \end{equation}
\end{proof}

\subsection{Proof of Corollaries~\ref{corollary::expectational generalization bounds::KL sub gaussian} and~\ref{corollary::expectational generalization bounds::KL sub gamma}}
\label{subsection::appendix::proof of KL bounds}
\begin{proof}
    It suffices to prove Corollary~\ref{corollary::expectational generalization bounds::KL sub gaussian} and then Corollary~\ref{corollary::expectational generalization bounds::KL sub gamma} follows by a similar argument. By Theorem~\ref{theorem::the general theorem}, we have
    \begin{align}
        \GeneralizationErrorAlgorithmic 
        &\leq \frac{1}{n}\sum_{i=1}^n\sqrt{2\sigma^2\KLDivergence{P_i}{Q}} \\
        &= \frac{1}{n}\sum_{i=1}^n\sqrt{2\sigma^2\bigl(\ConditionalKLDivergence{\ProbabilityKernel{W}{Z_i}}{Q_W}{\nu} + \KLDivergence{\nu}{\mu}\bigr)},
    \end{align}
    where the equality follows from the chain rule of KL divergence. Taking infimum over $Q_W$ yields~\eqref{equation::expectational generalization bounds::KL sub gaussian}, which is due to the following lemma.
    \begin{lemma}[Theorem 4.1 in~\cite{polyanskiy2022information}]
        Suppose $(W, Z)$ is a pair of random variables with marginal distribution $P_W$ and let $Q_W$ be an arbitrary distribution of $W$. If $\KLDivergence{P_W}{Q_W}<\infty$, then 
        \begin{equation}
            I(W;Z) = \ConditionalKLDivergence{\ProbabilityKernel{W}{Z}}{Q_W}{Z} - \KLDivergence{P_W}{Q_W}.
        \end{equation}
    \end{lemma}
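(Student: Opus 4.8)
The plan is to prove this identity (the so-called ``golden formula'') by expanding the conditional KL divergence and factorizing the relevant Radon-Nikodym derivative. Throughout I write $P_Z$ for the marginal of $Z$ and work with the joint distribution $P_{WZ} = \ProbabilityKernel{W}{Z}\otimes P_Z$, whose $W$-marginal is $P_W$. If $P_W$ is not absolutely continuous with respect to $Q_W$ the right-hand side contains $\KLDivergence{P_W}{Q_W} = +\infty$, contradicting the hypothesis, so I may assume $P_W\ll Q_W$.

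First I would write the conditional divergence as an expectation of a log-derivative,
\[
\ConditionalKLDivergence{\ProbabilityKernel{W}{Z}}{Q_W}{Z} = \Expectation{P_{WZ}}{\log \RadonNikodym{\ProbabilityKernel{W}{Z}}{Q_W}}.
\]
The core step is the chain rule for Radon-Nikodym derivatives,
\[
\RadonNikodym{\ProbabilityKernel{W}{Z}}{Q_W} = \RadonNikodym{\ProbabilityKernel{W}{Z}}{P_W}\cdot\RadonNikodym{P_W}{Q_W},
\]
which holds $P_{WZ}$-almost surely under $P_W\ll Q_W$. Taking logarithms and using linearity of expectation splits the conditional divergence into two pieces.

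The first piece, $\Expectation{P_{WZ}}{\log \RadonNikodym{\ProbabilityKernel{W}{Z}}{P_W}}$, is by definition $I(W;Z)$. For the second piece, $\Expectation{P_{WZ}}{\log\RadonNikodym{P_W}{Q_W}}$, I would observe that the integrand depends only on the coordinate $W$, so the expectation under the joint law collapses to an expectation under the $W$-marginal $P_W$, yielding exactly $\KLDivergence{P_W}{Q_W}$. Combining the two pieces gives the additive identity $\ConditionalKLDivergence{\ProbabilityKernel{W}{Z}}{Q_W}{Z} = I(W;Z) + \KLDivergence{P_W}{Q_W}$, and a rearrangement produces the stated form.

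The main obstacle is the bookkeeping around infinities: the rearrangement that isolates $I(W;Z)$ is only legitimate when the subtracted quantity $\KLDivergence{P_W}{Q_W}$ is finite, which is precisely the hypothesis of the lemma. I would therefore establish the additive identity first, where every term is non-negative and the equation is valid in $[0,+\infty]$, and only afterwards subtract the finite quantity $\KLDivergence{P_W}{Q_W}$ to avoid an $\infty-\infty$ indeterminacy and obtain the claimed expression for $I(W;Z)$.
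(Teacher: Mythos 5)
The paper does not actually prove this lemma---it is imported verbatim as Theorem~4.1 of the cited reference---so there is no in-paper argument to compare against. Your proof is the standard derivation of this ``golden formula'' and is correct: the factorization $\RadonNikodym{\ProbabilityKernel{W}{Z}}{Q_W}=\bigl(\RadonNikodym{\ProbabilityKernel{W}{Z}}{P_W}\bigr)\cdot\bigl(\RadonNikodym{P_W}{Q_W}\bigr)$ is the right key step, and you correctly identify and defuse the only real pitfall, namely proving the additive identity first (where both summands are non-negative and the equation holds in $[0,+\infty]$) and only then subtracting the finite term $\KLDivergence{P_W}{Q_W}$. The one point stated a bit too quickly is the claim that the chain rule holds $P_{WZ}$-almost surely ``under $P_W\ll Q_W$'': what is actually needed is $\ProbabilityKernel{W}{Z=z}\ll P_W$ for $P_Z$-almost every $z$, which follows not from the hypothesis but from $P_W$ being the $P_Z$-mixture of the kernels $\ProbabilityKernel{W}{Z=z}$ (together with the standard regularity, e.g.\ a countably generated $\sigma$-algebra on $\mathcal{W}$, that lets the $A$-dependent null sets be unified); it would be worth saying this explicitly, though it is the same tacit assumption the textbook makes.
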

    By the non-negativity of KL divergence, the infimum is achieved at $Q_W = P_W$ and thus $I(W; Z_i) = \ConditionalKLDivergence{\ProbabilityKernel{W}{Z_i}}{P_W}{\nu}$.
\end{proof}

\subsection{Proof of Corollary~\ref{corollary::expectational generalization bounds::Chi Square}}
\label{subsection::appendix::proof of chi square bounds}
\begin{proof}
    A direct calculation shows $f^*(y) = \frac{1}{4}y^2 + y$ for $f(x) = (x-1)^2$, and thus
    $\GeneralizedCGF{f}{\mu}{t\Bar{\ell}(w,Z)} = \frac{1}{4}\mathrm{Var}_{\mu}\LossFunction{w}{Z}t^2$. 
    Therefore, we can choose $\psi(t) = \frac{1}{4}\sigma^2t^2$ and thus $(\psi^*)^{-1}(y) = \sqrt{\sigma^2y}$. Applying Theorem~\ref{theorem::the general theorem} yields~\eqref{equation::chi square generalization bounds}.
\end{proof}



\subsection{Proof of Corollary~\ref{corollary::pp generalization bounds}}
\label{subsection::appendix::proof of PP generaliation bounds}
\begin{proof}
    Since $\Bar{\Gamma}=\{\Bar{\ell}\}$, we have
    \begin{align}
        \GammaBarIPM{P_i}{\eta_i} &= \Expectation{P_i}{\Bar{\ell}} - \Expectation{\eta_i}{\Bar{\ell}} \\
        &= \Expectation{P_W\otimes\nu}{\ell} - \Expectation{\ProbabilityKernel{W}{Z_i}\otimes\nu}{\ell}.
        \label{equation::appendix::PP generalization bounds::eq-1}
    \end{align}
    Inserting~\eqref{equation::appendix::PP generalization bounds::eq-1} into Theorem~\ref{theorem::the general theorem} and rearranging terms yields
    \begin{align}
        \Expectation{P_W\otimes\mu}{\ell} - \Expectation{P_W\otimes\nu}{\ell} 
        &\leq (\psi^*)^{-1}(\fDivergence{P_W\otimes\nu}{P_W\otimes\mu}) \\
        & = (\psi^*)^{-1}(\fDivergence{\nu}{\mu}).
    \end{align}
\end{proof}


\subsection{Derivation of Estimating the Gaussian and Bernoulli Means}
\label{subsection::appendix::Gaussian and Bernoulli means}
To calculate the generalization bounds, we need the distribution $P_i$ and $Q$. All the following results are given in the general $d$-dimensional case, where we let the training distribution be $\mathcal{N}(\mathbf{m},\sigma^2\mathbf{I}_d)$ and the testing distribution be $\mathcal{N}(\mathbf{m}',(\sigma')^2\mathbf{I}_d)$. 

We can check that both $P_i$ and $Q$ are joint Gaussian. Write the random vector as $\displaystyle[\mathbf{Z}^{\mathrm{T}}, \mathbf{W}^{\mathrm{T}}]^{\mathrm{T}}$, then $P_i$ and $Q$ are given by
\begin{align}
    & P_i = \mathcal{N}\left(\left[
    \begin{array}{cc}
       \mathbf{m}    \\[4pt]
       \mathbf{m}   
    \end{array}
    \right], \left[
    \begin{array}{cc}
         \sigma^2\mathbf{I}_d, &  \frac{1}{n}\sigma^2\mathbf{I}_d\\[4pt]
         \frac{1}{n}\sigma^2\mathbf{I}_d, & \frac{1}{n}\sigma^2\mathbf{I}_d
    \end{array}
    \right]\right), \\
    & Q = \mathcal{N}\left(\left[
    \begin{array}{cc}
       \mathbf{m}'    \\ [2pt]
       \mathbf{m}   
    \end{array}
    \right], \left[
    \begin{array}{cc}
         (\sigma')^2\mathbf{I}_d, &  \mathbf{0}\\[2pt]
         \mathbf{0}, & \frac{1}{n}\sigma^2\mathbf{I}_d
    \end{array}
    \right]\right).
\end{align}
The KL divergence between $P_i$ and $Q$ is given by
\begin{equation}
    \KLDivergence{P_i}{Q} = \log\frac{\det \mathbf{\Sigma}_{P_i}}{\det \mathbf{\Sigma}_{Q}} - 2d + \mathrm{Tr}(\mathbf{\Sigma}_{P_i}\mathbf{\Sigma}_Q^{-1}) \\
    + \exp\bigl((\mathbf{m}_{P_i} - \mathbf{m}_Q)^{\mathrm{T}}\mathbf{\Sigma}_Q^{-1}(\mathbf{m}_{P_i} - \mathbf{m}_Q)\bigr),
\end{equation}
where $\mathbf{m}_{P_i}$ (\emph{resp.}, $\mathbf{m}_Q$) denotes the mean vector of $P_i$ (\emph{resp.}, $Q$), and $\mathbf{\Sigma}_{P_i}$ (\emph{resp.}, $\mathbf{\Sigma}_Q$) denotes the covariance matrix of $P_i$ (\emph{resp.}, $Q$).
The $\chi^2$ divergence between $P_i$ and $Q$ is given by
\begin{equation}
    \ChiSquareDivergence{P_i}{Q} = \frac{\det\mathbf{\Sigma}_{Q}}{\sqrt{\det\mathbf{\Sigma}_{P_i}}\sqrt{\det\bigl(2\mathbf{\Sigma}_Q - \mathbf{\Sigma}_{P_i}\bigr)}}\cdot \\
    \exp\Bigl((\mathbf{m}_{P_i} - \mathbf{m}_Q)^{\mathrm{T}}\bigl(2\mathbf{\Sigma}_Q - \mathbf{\Sigma}_P\bigr)^{-1}(\mathbf{m}_{P_i} - \mathbf{m}_Q)\Bigr) - 1.
\end{equation}
Finally, the true generalization gap is given by
\begin{equation}
    \GeneralizationErrorAlgorithmic = \left((\sigma')^2 - \sigma^2\right)d + \frac{2\sigma^2d}{n} + \|\mathbf{m}-\mathbf{m}'\|_2^2.
\end{equation}

As for the example of estimating Bernoulli examples, a direct calculation shows 
\begin{equation}
    P_i\left(
    \begin{aligned}
        &Z_i=1 \\
        &W=\frac{k}{n}
    \end{aligned} 
    \right) = \left\{
    \begin{aligned}
        &\binom{n-1}{k-1}p^{k}(1-p)^{n-k-1}, 1\leq k\leq n, \\
        &0, k = 0,        
    \end{aligned}
    \right.
\end{equation}
\begin{equation}
    P_i\left(
    \begin{aligned}
        &Z_i=0 \\
        &W=\frac{k}{n}
    \end{aligned} 
    \right) = \left\{
    \begin{aligned}
        &\binom{n-1}{k}p^{k}(1-p)^{n-k}, 0\leq k\leq n-1, \\
        &0, k = n.    
    \end{aligned}
    \right.
\end{equation}
The distribution $Q$ is the product of $\mathrm{Bern}(p')$ and the binomial distribution with parameter $(n,p)$. 
Then the $f$-divergence can be directly calculated by definition. Finally, the true generalization gap is given by 
\begin{equation}
    \GeneralizationErrorAlgorithmic = 
    2\sum_{k=1}^n\binom{n-1}{k-1}p^k(1-p)^{n-1}\frac{k}{n} \\
    + (1-2p)p'-p.
\end{equation}

\section{Proofs in Section~\ref{section::Applications on the SGLD Algorithm}}
\label{section::appendix::proofs of the theorems about SGLD}

\subsection{Proof of Theorem~\ref{theorem::SGLD generalization bound using Pinsker and B-H inequality}}
\label{subsection::appendix::proof of theorem::SGLD using Pinsker and B-H}

The following lemma will be useful to the proof, which is implicitly included in~\cite{bu2020tightening}. 
    \begin{lemma}
        Suppose the loss function is $L$-Lipschitz, then, 
        \begin{equation}
            I\left(W_t;Z_i|W_{t-1}\right)\leq \frac{\eta_t^2L^2}{2\sigma_t^2}.
        \end{equation}
        \label{lemma::upper bound of I(W_t;Z_i|W_{t-1})}
    \end{lemma}
\begin{proof}[Proof of Theorem~\ref{theorem::SGLD generalization bound using Pinsker and B-H inequality}]
    Invoking Corollary~\ref{corollary::expectational generalization bounds::TV} and it suffices to bound the in-distribution term $\Expectation{\nu}{\TotalVariation{P_{W|Z_i}}{P_{W}}}$. Let the sample path $U_{[T]}$ be fixed, we have
    \begin{align}
        \Expectation{\nu}{\TotalVariation{P_{W|Z_i}}{P_{W}}}
        &\leq \sqrt{\Expectation{\nu}{\mathrm{TV}^2\left(P_{W|Z_i}, P_{W}\right)}} 
        \label{equation::TV-KL-SGLD::eq-1-1}\\
        &\leq \sqrt{\Expectation{\nu}{1 - \exp\left(-\KLDivergence{P_{W|Z_i}}{P_{W}}\right)}}
        \label{equation::TV-KL-SGLD::eq-1-2}\\
        &\leq \sqrt{1 - \exp\left(-\Expectation{\nu}{\KLDivergence{P_{W|Z_i}}{P_{W}}}\right)}
        \label{equation::TV-KL-SGLD::eq-1-3}\\
        &= \sqrt{1 - \exp\left(-I\left(W;Z_i\right)\right)}
        \label{equation::TV-KL-SGLD::eq-1-4}\\
        &\leq \sqrt{1 - \exp\left(-I\left(W_{[0:T]};Z_i\right)\right)}
        \label{equation::TV-KL-SGLD::eq-1-5}\\
        & = \sqrt{1 - \exp\left(-\sum_{t=1}^{T}I\left(W_t;Z_i|W_{t-1}\right)\right)}
        \label{equation::TV-KL-SGLD::eq-1-6}\\
        &\leq \sqrt{1 - \exp\left(-\sum_{t\in\mathcal{T}_i} \frac{\eta_t^2L^2}{2\sigma_t^2}\right)}.
        \label{equation::TV-KL-SGLD::eq-1-7}
    \end{align}
    In the above, inequalities~\eqref{equation::TV-KL-SGLD::eq-1-1} and~\eqref{equation::TV-KL-SGLD::eq-1-3} follow by Jensen's inequality, combined with the fact that the functions $\sqrt{x}$ and $1 - e^{-x}$ are both concave. Inequality~\eqref{equation::TV-KL-SGLD::eq-1-2} follows by the Bretagnolle-Huber inequality. To achieve inequality~\eqref{equation::TV-KL-SGLD::eq-1-5}, we use the inequality $I(W;Z_i)\leq I(W_{[0:T]};Z_i)$ and the fact that function $1-e^{-x}$ is monotonously increasing. 
    Equality~\eqref{equation::TV-KL-SGLD::eq-1-6} follows by the chain rule of mutual information, and the last inequality follows by Lemma~\ref{lemma::upper bound of I(W_t;Z_i|W_{t-1})} and the fact that $W_t$ is independent of $Z_i$ given $W_{t-1}$ if $t\notin \mathcal{T}_i$.

    On the other hand, we use Pinsker's inequality to yield
    \begin{align}
        \Expectation{\nu}{\TotalVariation{P_{W|Z_i}}{P_{W}}}
        &\leq \Expectation{\nu}{\sqrt{\frac{1}{2}\KLDivergence{P_{W|Z_i}}{P_W}}}
        \label{equation::KL-TV-SGLD::eq-2-1}\\
        &\leq \sqrt{\frac{I(W;Z_i)}{2}} 
        \label{equation::KL-TV-SGLD::eq-2-2}\\
        &\leq \frac{1}{2}\sqrt{\sum_{t\in\mathcal{T}_i}\frac{\eta_t^2L^2}{\sigma_t^2}},
        \label{equation::KL-TV-SGLD::eq-2-3}
    \end{align}
    where the last inequality follows by a similar argument from~\eqref{equation::TV-KL-SGLD::eq-1-4} to~\eqref{equation::TV-KL-SGLD::eq-1-7}. Therefore, the 
    $\Expectation{\nu}{\TotalVariation{P_{W|Z_i}}{P_{W}}}$ term is dominated by the minimum of~\eqref{equation::TV-KL-SGLD::eq-1-7} and~\eqref{equation::KL-TV-SGLD::eq-2-3}. Plug this result into Corollary~\ref{corollary::expectational generalization bounds::TV}, and the proof is completed by taking expectation \wrt the sample path $U_{[T]}$.
\end{proof}

\subsection{Proof of Theorem~\ref{theorem::f-ICIMI bounds on SGLD::chain rule}}
\label{subsection::appendix::proof of theorem::SGLD using chain rule}

For the proof of Theorem~\ref{theorem::f-ICIMI bounds on SGLD::chain rule}, the following lemmas will be useful.
\begin{lemma}[Chain Rule of Hellinger Distance] Let $P$ and $Q$ be two probability distributions on the random vector $X_{[n]} = (X_1,\ldots,X_n)$. Then,
    \label{lemma::Chain rule of squared Hellinger divergence}
    \begin{equation}
        \HellingerDistance{P_{X_{[n]}}}{Q_{X_{[n]}}}\leq \sum_{i=1}^n\sqrt{\Expectation{P_{X_{[i-1]}}}{\SquaredHellingerDivergence{P_{X_i|X_{[i-1]}}}{Q_{X_i|X_{[i-1]}}}}}.
    \end{equation}
\end{lemma}

\begin{lemma}[Squared Hellinger Divergence between two Gaussian Distribution]
    \label{lemma::Squared Hellinger Divergence between two Gaussian Distribution}
    Let $P = \mathcal{N}(\mathbf{\mu}_1, \mathbf{\Sigma}_1)$ and $Q = \mathcal{N}(\mathbf{\mu}_2, \mathbf{\Sigma}_2)$ be two multivariate Gaussian distributions. Then the squared Hellinger divergence of $P$ and $Q$ can be written as
    \begin{equation}
        \SquaredHellingerDivergence{P}{Q} = 2 - 
        2\frac{\det(\mathbf{\Sigma}_1)^{1/4}\det(\mathbf{\Sigma}_2)^{1/4}}{\det\left(\frac{\mathbf{\Sigma}_1+\mathbf{\Sigma}_2}{2}\right)^{1/2}}
        \exp\left(-\frac{1}{8}(\mathbf{\mu}_1 - \mathbf{\mu}_2)^{\mathrm{T}}\left(\dfrac{\mathbf{\Sigma}_1 + \mathbf{\Sigma}_2}{2}\right)^{-1}(\mathbf{\mu}_1 - \mathbf{\mu}_2)\right).
    \end{equation}
\end{lemma}
\begin{proof}[Proof Sketch]
    It follows from the Gaussian integral $\displaystyle\int_{\RealNumber^n}\exp\left(-\frac{1}{2}\mathbf{x}^{\mathrm
    T}\mathbf{A}\mathbf{x} + \mathbf{b}^{\mathrm{T}}\mathbf{x}\right)\mathrm{d}\mathbf{x} = \sqrt{\frac{(2\pi)^n}{\det(\mathbf{A})}}\exp\left(\frac{1}{2}\mathbf{b}^{\mathrm{T}}\mathbf{A}^{-1}\mathbf{b}\right)$ and 
    the equality $\left(\mathbf{\Sigma}_1^{-1}+\mathbf{\Sigma}_2^{-1}\right)^{-1} = \mathbf{\Sigma}_1\left(\mathbf{\Sigma}_1+\mathbf{\Sigma}_2\right)^{-1}\mathbf{\Sigma}_2 = \mathbf{\Sigma}_2\left(\mathbf{\Sigma}_1+\mathbf{\Sigma}_2\right)^{-1}\mathbf{\Sigma}_1$. The last two equalities follow by the Woodbury matrix identity $\displaystyle \left(\mathbf{A}+\mathbf{U}\mathbf{C}\mathbf{V}\right)^{-1}=\mathbf{A}^{-1}-\mathbf{A}^{-1}\mathbf{U}\left(\mathbf{C}^{-1}+\mathbf{V}\mathbf{A}^{-1}\mathbf{U}\right)^{-1}\mathbf{V}\mathbf{A}^{-1}$. Then the Lemma~\ref{lemma::Squared Hellinger Divergence between two Gaussian Distribution} follows by direct calculation.
\end{proof}

\begin{lemma}
    \label{lemma::upper bound of the second moment of gradient-difference-norm}
    Suppose the gradient of loss function is $L$-Lipschitz, \ie, $\displaystyle\sup_{w\in\mathcal{W}, z\in\mathcal{Z}}\|\nabla_{w}\ell(w,z)\|\leq L$. Then, we have
    \begin{equation}
        \Expectation{W,Z_i^\pm}{\|\nabla_{W}\ell(W,Z_i^+) - \nabla_{W}\ell(W,Z_i^-)\|^2} \leq \frac{L^2}{2} .
    \end{equation}
\end{lemma}

\begin{proof}
    Since $\nabla_{W}\ell(W,Z_i^+)$ and $\nabla_{W}\ell(W,Z_i^-)$ are independent and identically distributed conditioned on $W$, we have
    \begin{align}
        \Expectation{W,Z_i^\pm}{\|\nabla_{W}\ell(W,Z_i^+) - \nabla_{W}\ell(W,Z_i^-)\|^2}
        &= 2 \Expectation{W}{\Expectation{Z|W}{\|\nabla_{W}\ell(W,Z)\|^2} - \left(\Expectation{Z|W}{\|\nabla_{W}\ell(W,Z)\|}\right)^2} \\
        &= 2\Expectation{W}{\mathrm{Var}_{Z|W}(\|\nabla_{W}\ell(W,Z)\|)} \\
        &\leq L^2 / 2, 
    \end{align}
    where we remove the superscript and subscript of $Z$ for simplicity. The last inequality follows since the variance of a bounded random variable $X\in[a, b]$ is no more than $(b-a)^2 / 4$. 
\end{proof}
Now we are ready to prove the main result of this subsection.
\begin{proof}[Proof of Theorem~\ref{theorem::f-ICIMI bounds on SGLD::chain rule}]
    By Theorem~\ref{theorem::f-ICIMI}, we have
    \begin{align}
        \InDistributionGeneralizationGap
        &\leq \frac{b-a}{n}\sum_{i=1}^n\sqrt{\frac{2\fCMI[f]{W_T}{R_i}{Z_i^{\pm}}}{f''(1)}} \\
        & \leq \frac{b-a}{n}\sum_{i=1}^n\sqrt{\frac{2\fCMI[f]{W_{[0:T]}}{R_i}{Z_i^{\pm}}}{f''(1)}} \\
        &\leq \frac{2(b-a)}{n}\sum_{i=1}^n\sum_{t\in \mathcal{T}_i}\sqrt{\fCMI[H^2]{W_t}{R_i}{W_{t-1},Z_i^\pm}},
        \label{equation::upper bound of SGLD in terms of chain rule of squared Hellinger divergence}
    \end{align}
    where the last inequality follows by Lemma~\ref{lemma::Chain rule of squared Hellinger divergence}. As a consequence, it is sufficient to bound the $\fCMI[H^2]{W_t}{R_i}{W_{t-1},Z_i^\pm}$ term.
    Let $\ProbabilityKernel{R_i}{W_{t-1},Z_i^\pm} = \alpha\delta_1 + (1-\alpha)\delta_{-1}$, \ie, $R_i$ takes 1 with probability $\alpha$ and takes $-1$ with probability $(1-\alpha)$, given $W_{t-1}$ and $Z_i^\pm$. On the other hand, we have
    \begin{equation}
        \ProbabilityKernel{W_t}{W_{t-1},Z_i^\pm, R_i} = 
        \begin{cases}
            \mathcal{N}\left(W_t - \eta_t\nabla_W\ell(W_{t-1},Z_i^+), \sigma_t^2\mathbf{I}\right)\overset{\triangle}{=}\mathcal{N}^+, &\text{if} \ R_i=1,\\
            \mathcal{N}\left(W_t - \eta_t\nabla_W\ell(W_{t-1},Z_i^-), \sigma_t^2\mathbf{I}\right)\overset{\triangle}{=}\mathcal{N}^-, &\text{if} \ R_i=-1,
        \end{cases}
        \label{equation::W_t given W_t-1 and Z_i and R_i is Gaussian}
    \end{equation}
    and thus $\ProbabilityKernel{W_t}{W_{t-1},Z_i^\pm}$ is mixture Gaussian, expressed by
    \begin{equation}
        \ProbabilityKernel{W_t}{W_{t-1},Z_i^\pm} = \alpha\mathcal{N}^+ + (1-\alpha)\mathcal{N}^-. 
        \label{equation::W_t given W_t-1 and Z_i is mixture Gaussian}
    \end{equation}
    Keep in mind that $\alpha$ is a function of $W_{t-1}$ and $Z_i^\pm$, and the Gaussian components $\mathcal{N}^+$ and $\mathcal{N}^-$ are also parameterized by $W_{t-1}$ and $Z_i^\pm$. We drop these dependencies from the notation for abbreviation.
    Plugging~\eqref{equation::W_t given W_t-1 and Z_i and R_i is Gaussian} and~\eqref{equation::W_t given W_t-1 and Z_i is mixture Gaussian} into~\eqref{equation::upper bound of SGLD in terms of chain rule of squared Hellinger divergence} yields
    \begin{align}
        \fCMI[H^2]{W_t}{R_i}{W_{t-1},Z_i^\pm} &= \Expectation{W_{t-1},Z_i^\pm}{\Expectation{R_i|W_{t-1},Z_i^\pm}{\SquaredHellingerDivergence{\ProbabilityKernel{W_t}{W_{t-1},Z_i^\pm,R_i}}{\ProbabilityKernel{W_t}{W_{t-1},Z_i^\pm}}}}\\
        &= \Expectation{W_{t-1},Z_i^\pm}{\alpha\SquaredHellingerDivergence{\mathcal{N}^+}{\alpha\mathcal{N}^++(1-\alpha)\mathcal{N}^-} + (1-\alpha)\fDivergence{\mathcal{N}^-}{\alpha\mathcal{N}^++(1-\alpha)\mathcal{N}^-}} \\
        &\leq \Expectation{W_{t-1},Z_i^\pm}{\alpha(1-\alpha)\left(\SquaredHellingerDivergence{\mathcal{N}^+}{\mathcal{N}^-} + \SquaredHellingerDivergence{\mathcal{N}^-}{\mathcal{N}^+}\right)}\\
        &\leq \Expectation{W_{t-1},Z_i^\pm}{\frac{1}{4}\left(\SquaredHellingerDivergence{\mathcal{N}^+}{\mathcal{N}^-} + \SquaredHellingerDivergence{\mathcal{N}^-}{\mathcal{N}^+}\right)}.
    \end{align}
    To obtain the first two inequalities, we use the fact that the $f$-divergence $\fDivergence{\cdot}{\cdot}$ is jointly convex \wrt its arguments, and thus separately convex. To obtain the last inequality, we use the fundamental inequality $\sqrt{xy}\leq \frac{x+y}{2}$ to eliminate $\alpha$.
    Now, we choose $D_f$ to be the squared Hellinger divergence. By Lemma~\ref{lemma::Squared Hellinger Divergence between two Gaussian Distribution}, we have
    \begin{align}
        \SquaredHellingerDivergence{\mathcal{N}^+}{\mathcal{N}^-} 
        &= \SquaredHellingerDivergence{\mathcal{N}^-}{\mathcal{N}^+} \\
        &= 2 -2\exp\left(-\frac{\eta_t^2\|\nabla_W\ell(W_{t-1},Z_i^+) - \nabla_W\ell(W_{t-1},Z_i^-)\|^2}{8\sigma_t^2}
        \right),
    \end{align}
    and thus 
    \begin{align}
    \fCMI[H^2]{W_t}{R_i}{W_{t-1},Z_i^\pm}
        &\leq \Expectation{W_{t-1},Z_i^\pm}{1 - \exp\left(-\frac{\eta_t^2\|\nabla_W\ell(W_{t-1},Z_i^+) - \nabla_W\ell(W_{t-1},Z_i^-)\|^2}{8\sigma_t^2}
        \right)} \\
        &\leq 1 - \exp\left(-\frac{\eta_t^2}{8\sigma_t^2}\Expectation{W_{t-1},Z_i^\pm}{\|\nabla_W\ell(W_{t-1},Z_i^+) - \nabla_W\ell(W_{t-1},Z_i^-)\|^2}\right) \\
        &\leq 1 - \exp\left(-\frac{\eta_t^2L^2}{16\sigma_t^2}\right).
        \label{equation::upper bound of f mutual information between W_t and R_i conditioned on W_t-1 and Z_i::Chain rule of H^2}
    \end{align}
To obtain the second inequality we use the concavity of the function $1 - e^{-x}$, and to obtain the last inequality we use Lemma~\ref{lemma::upper bound of the second moment of gradient-difference-norm} and the fact that $1 - e^{-x}$ is monotonously increasing. Plugging~\eqref{equation::upper bound of f mutual information between W_t and R_i conditioned on W_t-1 and Z_i::Chain rule of H^2} into~\eqref{equation::upper bound of SGLD in terms of chain rule of squared Hellinger divergence} yields
\begin{align}
    \InDistributionGeneralizationGap &\leq \frac{2(b-a)}{n}\sum_{i=1}^n\sum_{t\in\mathcal{T}_i}\sqrt{1 - \exp\left(-\frac{\eta_t^2L^2}{16\sigma_t^2}\right)} \\
    &= \frac{2(b-a)}{n}\sum_{t\in [T]} \sqrt{1 - \exp\left(-\frac{\eta_t^2L^2}{16\sigma_t^2}\right)}, \nonumber
\end{align}
which is the desired result.
\end{proof}

\subsection{Proof of Lemma~\ref{lemma::subadditivity of f mutual information}}
\label{subsection::appendix::proof of lemma::subadditivity of f-divergence}

\begin{proof}[Proof of Lemma~\ref{lemma::subadditivity of f mutual information}]
    \begin{figure*}[ht]
            \centering
            \includegraphics[width=0.5\textwidth]{./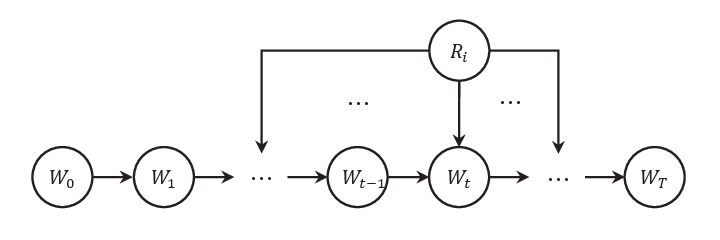}
            \caption{The Bayes network structure of SGLD algorithm (given $Z_i^\pm = z_i^\pm$ fixed and only $R_i$ is involved).}
            \label{figure::Bayes network of SGLD}
    \end{figure*}
    
    Let the super-sample $Z_i^\pm = z_i^\pm$ be fixed. 
    The SGLD optimization trajectory induces a Bayes network $\mathcal{G}$, as illustrated in Fig.~\ref{figure::Bayes network of SGLD}, where only $R_i$ is considered. Since $D_f$ is sub-additive, we have
    \begin{align}
        \fDivergence{P}{Q} 
        &\leq \sum_{v\in\mathcal{G}}\fDivergence{P_{\{v\}\cup\Pi_v}}{Q_{\{v\}\cup\Pi_v}} \\
        &= \fDivergence{P_{R_i}}{Q_{R_i}} + \fDivergence{P_{W_0}}{Q_{W_0}} + \sum_{t\in\mathcal{T}_i}\fDivergence{P_{W_t,W_{t-1},R_i}}{Q_{W_t,W_{t-1},R_i}} + \sum_{t\notin\mathcal{T}_i}\fDivergence{P_{W_t,W_{t-1}}}{Q_{W_t,W_{t-1}}},
        \label{equation::f divergence between P and Q can be decomposed into 3 parts}
    \end{align}
    for all distributions $P$ and $Q$ over $ \mathcal{W}^{T+1}\times \{-1, +1\}$. Now define $P$ as the joint distribution of $W_{[0:T]}$ and $R_i$ induced by the SGLD algorithm, \ie,
    \begin{align}
        P &\coloneqq P_{R_i} \otimes \ProbabilityKernel{W_{[0:T]}}{R_i} \\
        &= P_{R_i}P_{W_0}\prod_{t\notin\mathcal{T}_i}\ProbabilityKernel{W_t}{W_{t-1}}\prod_{t\in\mathcal{T}_i}\ProbabilityKernel{W_t}{W_{t-1},R_i},
    \end{align}
    and define $Q$ as the product of marginal distribution of $W_{[0:T]}$ and $R_i$, \ie,
    \begin{align}
        Q &\coloneqq P_{R_i} \otimes P_{W_{[0:T]}} \\
        &= P_{R_i}P_{W_0}\prod_{t\in[T]}\ProbabilityKernel{W_t}{W_{t-1}}.
    \end{align}
    By induction over $t$, one can prove that $P$ and $Q$ have the same marginal distribution at each vertex in the graph $\mathcal{G}$, \ie, $P_{R_i} = Q_{R_i}$, $P_{W_0} = Q_{W_0}$, and $P_{W_t} = Q_{W_t},\ \forall t\in [T]$. Therefore, we have 
    \begin{subequations}
        \begin{align}
            \fDivergence{P_{R_i}}{Q_{R_i}} &= 0, \\
            \fDivergence{P_{W_0}}{Q_{W_0}} &= 0, \\
            \begin{split}
                \fDivergence{P_{W_t,W_{t-1}}}{Q_{W_t,W_{t-1}}} 
                &= \fDivergence{P_{W_{t-1}}\ProbabilityKernel{W_t}{W_{t-1}}}{Q_{W_{t-1}}Q_{W_t|W_{t-1}}} \\
                &= 0, \ t\notin\mathcal{T}_i, 
            \end{split}\\
            \begin{split}
                \fDivergence{P_{W_t,W_{t-1},R_i}}{Q_{W_t,W_{t-1},R_i}} 
                &= \Expectation{W_{t-1}}{\fDivergence{\ProbabilityKernel{W_t, R_i}{W_{t-1}}}{Q_{W_t,R_i|W_{t-1}}}},\\ 
                & = \Expectation{W_{t-1}}{\fDivergence{\ProbabilityKernel{W_t, R_i}{W_{t-1}}}{\ProbabilityKernel{W_t}{W_{t-1}}P_{R_i}}},\ t\in\mathcal{T}_i.
            \end{split}
        \end{align}
    \end{subequations}
    Add all terms together and use the fact that $\fDivergence{P}{Q} = I_f(W_{[0:T]};R_i)$, we have
    \begin{equation}
        I_f(W_{[0:T]};R_i)\leq \sum_{t\in\mathcal{T}_i}\Expectation{W_{t-1}}{\fDivergence{\ProbabilityKernel{W_t, R_i}{W_{t-1}}}{\ProbabilityKernel{W_t}{W_{t-1}}P_{R_i}}}.
        \label{equation::I_f(W^T;R_i) conditioned on Z = z}
    \end{equation}
    Keep in mind that~\eqref{equation::I_f(W^T;R_i) conditioned on Z = z} is established based on the condition $Z_i^\pm = z_i^\pm$. Thus, the desired result~\eqref{equation::subadditivity of f mutual information} follows by taking expectation over $z_i^\pm$.
\end{proof}


\bibliographystyle{IEEEtran}

\end{document}